\newcommand{\HighLight}{\textcolor{black}}
\begin{document}
\title{Maximizing Service Reward for Queues with Deadlines}

\author{Li-on Raviv, Amir Leshem
\thanks{Li-on Raviv is with the Faculty of Engineering, Bar-Ilan University, Ramat Gan 52900, Israel. Email: li.raviv@hotmail.com}
\thanks{Amir Leshem is with the Faculty of Engineering, Bar-Ilan University, Ramat Gan 52900, Israel. Email: leshem.amir2@gmail.com}
\thanks{This paper was partially supported by the Israeli Innovations Authority as part of the HERON 5G Magnet Consortium and the ISF-NRF reserch grant no. 2277/2016.}
}

\date{}  
\newcounter{AssumptionsCounter}

%\onehalfspacing

\newtheorem{definition}{Definition}
\newtheorem{lemma}{Lemma}
\newtheorem{theorem}{Theorem}
\newtheorem{claim}{Claim}
\newtheorem{remark}{Remark}
\newtheorem{observation}{Observation}
\maketitle

\begin{abstract}
In this paper we consider a real time queuing system with rewards and deadlines. We assume that packet processing time is known upon arrival, as is the case in communication networks. This assumption allows us to demonstrate that the well known Earliest-Deadline-First policy performance can be improved. We then propose a scheduling policy that provides excellent results for packets with rewards and deadlines. We prove that the policy is optimal under deterministic service time and binomial reward distribution. In the more general case we prove that the policy processes the maximal number of packets while collecting rewards higher than the expected reward. We present simulation results that show its high performance in more generic cases compared to the most commonly used scheduling policies.
\label{sec:abstract}
\end{abstract}

\begin{IEEEkeywords}
Queues, Renewal Reward, scheduling, Earliest Deadline First, Real Time, Deadline, packet networks.
\end{IEEEkeywords}

\section{Introduction}
\label{sec:introduction}
The growing usage of the Internet as an infrastructure for cloud services and the Internet of Things (IoT) has resulted in an extremely high demand for traffic bandwidth. The growth of the infrastructure has not kept pace with demand. This gap creates queues in the downstream traffic flow, which impacts service quality in terms of latency and packet loss. The Internet provides service for various types of applications that differ in their priority and latency sensitivity. In this paper, we discuss ways to schedule packets based on their sensitivity to latency (deadline) and priority (reward). We present an a novel scheduling policy that uses the information about the packets upon arrival. We show that this scheduling policy outperforms other policies which process maximal number of jobs.

Barrer in \cite{barrer1957queuing1} and \cite{barrer1957queuing2} pointed out that customers in a queue often have practical restrictions; namely, the time they are willing to wait until they get service.
This problem is known as the impatient customer phenomenon when it refers to people. 

A similar problem exists when handling time deadlines related to job processing or packet data transmission. The problem is known as real time system \cite{lehoczky1996real, stankovic2012deadline}. The term real-time system is divided into soft real time and hard real time. In the soft real time there is a non-completion penalty while  in hard real time there is no reward on non-completion. If the penalty is higher than the reward a soft real time system behaves similar to hard real time system scheduling \cite{yu2016deadline}. Brucker in \cite{brucker2007scheduling} describes many models of queuing systems and cost functions which use weights and due dates. While the weights are rewards and the due dates are deadlines as a step function the environment is limited integer values only. Basic overview of queuing theory to delay analysis in data networks is described e.g. in Bertsekas and Gallager \cite{bertsekas1992data} and in \cite{srikant2013communication}. These delay models focus mainly on the changing environment of the arrival process, service time, and the number of servers.

In the communication world, many Internet applications vary in their sensitivity to the late arrival of packets. The application are divided into two types; services with guaranteed delivery (TCP based applications) and non guaranteed delivery (UDP, RTP, ICMP based  applications).  In both cases there is an expiration time. VoIP, video, and other real-time signaling are highly sensitive to packet delays. These RTP based applications transfer data without acknowledgment (non-guaranteed delivery). In IoT environment sensors information needs to arrive to the processing units on time in order to ensure correct decisions. In these cases the timeout is short and late packets are ignored.  
Guaranteed delivery applications like File Transfer Protocol (FTP) do not suffer as much from packet delay since their deadline is relatively long. Packets that arrive after the long deadline are considered to be lost, and lost packets are considered to be late \cite{katzirscheduling}. In addition, in guaranteed based protocols, the round trip of payload delivery and acknowledgment is important to avoid re-transmission. The resulting re-transmissions overload the network. Detailed analysis of the Internet applications and latency problem is described in \cite{briscoe2016reducing}. Therefore, to maintain high-efficiency networks, late packets must be dropped as soon as it is known that they will miss the deadline.

There are many models to the impatient customer phenomenon in the literature. One common approach is to view impatient customers as an additional restriction on customer behavior, as was presented in Stankovic et al \cite{stankovic1995implications}. Human behavior assumes no knowledge  about the patience of a specific customer. When assessing balking behavior, it is assumed that a customer decides whether to enter the queue by estimating waiting time (balking). In \cite{wang2010queueing} the customer leaves the queue after waiting (reneging). There are models in which the actual deadline is known upon arrival \cite{bhattacharya1989optimal, peha1996cost, towsley1991optimality, panwar1988optimal}.  Other models assume only statistics regarding the abandonment rate \cite{wang2010queueing, larranaga2015asymptotically, atar2010cmu, ayesta2011nearly, yu2016deadline}. In some cases the abandonment included also cases of server failure which forces synchronized abandonment as described in \cite{altman2006analysis, adan2009synchronized, kapodistria2011m}.
The analysis of impatient customer behavior can be measured in two ways. One way measures the Impatience upon End of Service (IES) as was presented in \cite{movaghar1996queueing}.  In this approach a client must end its service before its deadline expires in order to be considered “on time”. The second measures the Impatience to the Beginning of Service (IBS) as is discussed in  \cite{movaghar2006queueing}.  If the service time is deterministic or exponential, the results of IES are also applicable to the IBS system without changing the expiration distribution behavior. Another type of prior information provided to the scheduler is the service time. As in the deadline case there might be statistical knowledge about the service time \cite{ larranaga2015asymptotically, atar2010cmu, ayesta2011nearly, yu2016deadline} in other cases there is no information about the service time \cite{ panwar1988optimal}.  An explicit information about a service times may be provided either upon arrival, prior to receiving service or even not known at the beginning of the service. In the case of goods and data packets, this information is already known at the time of arrival in the queue, and can be beneficial. 

The EDF scheduling is the most popular policy for scheduling real-time systems \cite{stankovic2012deadline}. The policy was proved to be optimal in many scenarios using different metrics. It was shown that if a queue is feasible then EDF is optimal by Jackson's rule \cite{stankovic2012deadline}, meaning that in the level of queue ordering it is optimal. In the general case of non feasible queues Panwar, Towsley and Wolf \cite{panwar1988optimal} showed that if unforced idle times are not allowed, EDF is optimal for the non-preemptive M/D/1 queue when the service time is exactly one time unit, EDF is shown to be optimal assuming service times are not known at the beginning of the service.
In \cite{towsley1991optimality} they expanded the proof to multiple queues when only partial information was provided on the deadlines. In \cite{katzirscheduling} Cohen and Katzir show that EDF without preemption is an optimal scheduling algorithm in the sense of packet loss in the VoIP environment assuming fixed packet size. Other applications of the EDF were discussed in \cite{saleh2010comparing}, comparing the EDF and FCFS (First Come First Serve) in networks with both QoS and non-QoS requirements.
In that paper different rewards are added to a successful packet delivery. This mechanism is used to differentiate between types of packets, and allows the scheduler to assign different service levels. The reward mechanism has a financial side which can involve setting Internet Service Provider (ISP) fees for transferring Internet traffic according to different Service Level Agreements (SLAs).
In the late 1990's, two standards were implemented to support QoS over the Internet. The IETF proposed two main architectures, IntServ \cite{braden1994integrated} and DifServ  \cite{blake1998architecture} working groups. 

A related problem is the multi-class scheduling which assumes mainly statistical information about the jobs. The problem consists of a single or multiple servers that can process different classes of customers arranged in different queues (each class has a queue). The queues are ordered according to FCFS. It is assumed that there is a cost ($c_k$) associated with each queue that is proportional to the waiting time of the customers and a service rate $\mu_k$. The target is to minimize the cost. It is assumed that the service time is unknown to the scheduler.  Atar, Giat and Shimkin expanded the model to include abandonment \cite{atar2010cmu}. They introduced the $c\mu/\theta$-rule, which gives service to the class k with highest $c_k\mu_k/\theta_k$ where $\theta_k$ is the abandonment rate. Customer patience is modeled by an exponentially distributed random variable with mean $\frac{1}{\theta_k}$.
The $C\mu/\theta$-rule minimizes asymptotically (in the fluid limit sense) the time-average holding cost in the overload case. In \cite{ayesta2011nearly} it was shown that the problem can be formulated as a discrete time Markov Decision Process (MDP). Then, the framework of multi-arm restless bandit problem is used to solve the problem. An optimal index rule for 1 or 2 users was obtained for users with abandonment. The queue selection problem is solved in \cite{kim2013dynamic} by approximating it as Brownian control problem under the assumption that the abandonment distribution for each customer class has an increasing failure rate.

Scheduling policies such as FCFS, Static Priority (SP), and EDF focus on the scheduling mechanism, which only defines the job service order. A different approach is to define which jobs do not receive service; these policies are known as dropping policies. The best results are achieved by combining scheduling and dropping approaches. The Shortest Time to Extinction (STE) policy is an algorithm that orders the queue while implementing a dropping mechanism for jobs that definitely cannot meet the deadline. The studies reported \cite{panwar1988optimal} and \cite{bhattacharya1989optimal} discuss the idea of implementing both $STE$ scheduling and a simple dropping mechanism which drops jobs that are eligible for service if their deadline expires before receiving service. In this paper we refer to EDF as $STE$.
Clare and Sastry \cite{clare1989value} presented a value-based approach and reviewed several policies and heuristics that use both scheduling and dropping mechanisms.

In order to evaluate the performance of these policies, a quantitative measure or a metric that reflects network performance needs to be defined. There are two approaches to defining a quantitative measure. The first approach is based on measuring the rewards by a utility function whereas the second approach is based on the cost function. The reward function measures the sum of the rewards of the jobs that received service. The cost function measures the sum of the rewards of the jobs that did not receive service. Peha and Tobagi \cite{peha1996cost} introduced cost-based-scheduling (CBS) and cost based dropping (CBD) algorithms and their performance. 

In general traffic models are divided into light traffic, moderate traffic and heavy traffic. This classification is derived from the ratio of the arrival process and the required service time. 
Examples for heavy traffic analysis can be found in \cite{kingman1962queues}, \cite{kingman2003single} and \cite{atar2017workload}.  In general heavy traffic assumes $\rho > 1$ which forces the queue to explode. In this paper the queue size is stable even if $\rho > 1$ since it is bounded. 

In this paper we propose a novel queuing policy that provides excellent results and prove it is optimal in certain cases. 

The queuing model that is presented in the paper assumes:
\begin{itemize}
\item Jobs parameters are: arrival and service times, deadlines and rewards.
\item All realizations are known upon arrival. This approach is less suitable for human behavior, but reflects the behavior of goods for delivery and packets over networks.
\item The server is idle infinitely often. This assumption is used in many proofs that focus on the interval between server idle times.
\item The scheduling policy is non-preemptive and forcing idle time is not allowed
\end{itemize}

We prove that this policy accumulate more rewards than previously proposed policies such as EDF and CBS, under deterministic service time. We proved that in the case of dual reward and deterministic time the policy is optimal. The same proof also shows that the policy is optimal in the case of dual service times and constant reward.
The proposed approach significantly improves the performance of routers due to its low computational demands.

The structure of the paper is as follows: In section \ref{sec:model} we present the queuing model, its assumptions and its notation. In section \ref{sec:policies} we describe the EDF policy and $c\mu/\theta$ policy. We also provide a simple example of how modifying the EDF policy improves the performance as the processing time is provided upon arrival. In section \ref{sec:MUD} we introduce the Maximum Utility with Dropping scheduling policy and analyze its performance. In section \ref{sec:perfanal} we analyze the performance of the proposed policy. 

In section \ref{sec:num} we present the numerical results and conclude the paper in section \ref{sec:conclusion}.

\section{Queuing with rewards and deadlines}
\label{sec:model}
In this section we describe the mathematical model of queues with deadlines and rewards. We setup notation and describe the various assumptions required for our analysis.

A queuing system is composed of jobs, scheduling policy queues and servers. The jobs entering the system are described by a renewal process with different attributes. The attributes may be assigned upon arrival or later on. The scheduling policy is responsible for allocating the servers to the jobs and choosing the job to be processed out of the queue when the server is idle. In the next few sections we define the stochastic model and the system events.

In this system events occur in continuous time following a discrete stochastic process. The event timing allows for a discrete time presentation. The renewal process is defined as follows:
\begin{itemize}
\item Let $\mathcal{A} =\{ \mathcal{A}_0, \mathcal{A}_1, \mathcal{A}_2..\mathcal{A}_i..\} $ be a timely ordered set of job arrival events.
\item Let $t_i$ be the time of $\mathcal{A}_i$.
\item Let $\mathcal{A}_0$ be the initialization event, $t_0=0$.
\item Let $\mathcal{A}_1$ be the first arrival to the system.
\item $\forall i \in  \mathbb{N}:t_i < t_{i+1}$.
\item Let $J_i$ be the job that arrived at time $t_i$.
 \end{itemize}

We use the extended Kendall \cite{kendall1953stochastic} notation $A/B/C-D/W$ to characterize the input process and the service mechanism.

\HighLight{
The random processes has different distribution types as $GI^+$ for a General i.i.d distribution with strictly positive values, $M$ for a Markovian process and $D$ for a deterministic distribution. $B$ is used for binomial i.i.d distribution of two different values, (the first value occurs in probability $p$ and second value occurs in probability $1-p$).}

We define the attributes of job $J_i$ as follows:
\begin{itemize}
	\item Let $A_i$ be the inter arrival time of the renewal-reward process $\mathcal{A}$, $ t_i = t_{i-1} + A_i= \sum\limits_{j=1}^i A_j$.
	\item Let $B_{min} \leq B_i \leq B_{max}$ be the processing time required by $J_i$.
	\item Let $D_{min} \leq D_i \leq D_{max}$ be the deadline of $J_i$. The time is measured from the arrival time to the beginning of the service (IBS).
	\item Let $W_{min} \leq W_i \leq W_{max}$ be the reward of processing $J_i$.
\end{itemize}

If deadline is large enough it becomes redundant.  In stable queues the bound can be the maximal queue size times the average service time according to Little's law. 
We use the following assumptions as part of the queuing model:
\begin{enumerate}[label=A\arabic*]
 	\item \label{A1} $B_i$, $D_i$ and  $W_i$ are known upon arrival of $J_i$.
\setcounter{AssumptionsCounter}{\value{enumi}}
\end{enumerate}
\HighLight{
\ref{A1} is fundamental assumption in this paper and the proposed policy operation depends on it. In many application assumption \ref{A1} is reasonable, for example in routers, where the service rate is known as well as packet size. To analyze the performance of the proposed method we need further assumptions: \ref{A2}-\ref{A8}. However these are only required to simplify the analysis.}

When $B_i$ is deterministic we use $B_i = 1$ as the service time.

The server is either in busy state when it is servicing a customer, processing a job or transmitting a packet. If the server is not busy it is in the idle state. In this article we use the job processing terminology. We assume that the server is always available (not in fault position).

The service mechanism and the scheduling policy follow the assumptions below:
\begin{enumerate}[label=SA\arabic*]
	\item \label{B1} A single server in the model  $(C=1)$.
	\item \label{B1A} The server processes a single job at a time.
	\item \label{B6} Non-preemptive policy - Processing must not be interrupted.
	\item \label{B3} Forced idle times are not allowed - The server processes jobs as long as there are eligible jobs in the queue without delays.
	\item \label{B2} The policy chooses the next job to be processed from the queue.
	\item \label{B5} The policy is deterministic. For the same jobs in the queue, the service order is always the same.
\end{enumerate}
This article focuses on a single server problem as discussed in \cite{hopp2008single} with a renewal reward process and a general distributed service time models with boundedness assumptions.

The events that occur in the model are:
\begin{enumerate}[label=EV\arabic*]
	\item \label{E1} Job arrival  ($\mathcal{A}_i = \{\ref{E1}\}$).
	\item \label{E2} Job processing begun.
	\item \label{E3} Job processing completed.
	\item \label{E4} Job dropping - a policy decision not to process a job forever. At this point we assume it leaves the queue.
	\item \label{E5} Job deadline expired - In this event the policy must drop the job.
\end{enumerate}

Job arrival (\ref{E1}) and deadline expiration (\ref{E5}) are events that are independent of policy decisions. Beginning the processing of a job (\ref{E2}) and job dropping (\ref{E4}) depend on policy decisions. Job processing completion (\ref{E3}) is derived from the initial processing time and the total processing time (which, by our assumptions, is known before the beginning of the processing). A job deadline expiration (\ref{E5}) has no immediate effect on the system. If a job deadline has expired (\ref{E5}) the policy must drop the job at the some point (\ref{E4}). In our model we allow job drop (\ref{E4}) to be only upon arrival (\ref{E1})  or at job processing begun (\ref{E2}) events. Policy can decide to drop a job before its deadline has expired. Assuming forcing idle time is forbidden (\ref{B3}), job processing completed takes place either when the queue is empty or at job processing begun (\ref{E2}). The events in the system are defined as follows:

\begin{itemize}
\item Let $\mathcal{E}^\pi =\{ \mathcal{E}_0, \mathcal{E}_1, \mathcal{E}_2^\pi..\mathcal{E}_n^\pi..\}$ be a timely  ordered set of all events that occur according to the renewal reward process $\mathcal{A}$ and policy $\pi$.
\item $\mathcal{E}_0=\mathcal{A}_0$ and $\mathcal{E}_1=\mathcal{A}_1 \cup \{\ref{E2}\}$.
\item $\forall n > 1: \emptyset \neq \mathcal{E}_n^\pi \subseteq \{\ref{E1}, \ref{E2},  \ref{E3}\}$.
\item Let $\hat{t}_n$ be the time of $\mathcal{E}_n^\pi$.
\item $\forall n \in  \mathbb{N}:\hat{t}_n < \hat{t}_{n+1}$.
\item $\forall i > 1, \exists n>1$ such that $t_i = \hat{t}_n$ and $\mathcal{A}_i \subseteq \mathcal{E}_n^\pi$.
\end{itemize}

\begin{definition}
The following parameters enable the analysis of the service at time $t$.
\begin{itemize}
\item Let $\hat{Q}_{t}^{\pi}$ be the set of waiting jobs in the queue at time $t$.
\item Let $Q_{t}^{\pi} \subset \hat{Q}_{t}^\pi$ be the queue potential at time $t$. Queue potential is an ordered set of jobs according to $\pi$ that will get service if no new job goes into the queue. By definition the queue potential is feasible.
\item Let $\hat{S}_{t}^{\pi}$ be the set of processed jobs by policy $\pi$ up to time $t$.
\item Let $S_{t}^{\pi} = Q_{t}^{\pi} \cup \hat{S}_{t}^\pi$ be the set of potential processed jobs.
\item Let $\Delta Q_{t}^{\pi} = |Q_{t}^{\pi}| - |Q_{t'}^{\pi}| ,(\Delta Q_{t}^{\pi} \in \mathbb{Z})$ be the difference in number of jobs in the queue potential between two consecutive system events where $t=\hat{t}_n$ and $t'=\hat{t}_{n-1}$.
\end{itemize}
\end{definition}
If $D_{max}$ and $B_{min}$ are finite the queue length is bounded by:
\begin{equation} \label{Qbound}
|Q_{t}^\pi| \leq \frac{D_{max}}{B_{min}}
\end{equation}

The General Queue Events of the system are:
\begin{enumerate}[label=GQ\arabic*]
\item A job reaches to the system ($\mathcal{A}_i \subseteq \mathcal{E}_n^\pi$).
\begin{enumerate}
\item If the server is idle and the queue is empty  $(\{ \ref{E1}, \ref{E2}\} \subseteq \mathcal{E}_n^\pi)$ it gets service immediately and enters the set of processed jobs ($\hat{S}_{t_i}^{\pi}$).
\item If the server is busy ($\ref{E1} \in \mathcal{E}_n^\pi, \ref{E2}\ \notin \mathcal{E}_n^\pi$), it enters the queue. At this point it is possible to compute the queue potential ($Q_{t_i}^{\pi}$). If the new job is added to the queue potential,  other jobs may be removed from the queue potential.
\end{enumerate}
\item The server has completed processing a job ($\mathcal{A}_i \nsubseteq \mathcal{E}_n^\pi$).
\begin{enumerate}
\item If the queue is not empty ($\mathcal{E}_n^\pi=\{\ref{E2}, \ref{E3}\}$) then the job that is at the head of the queue is processed and moves from the queue and queue potential to the set of processed jobs.
\item If the queue is empty ($\mathcal{E}_n^\pi=\{\ref{E3}\}$), there is no change in the set of system parameters.
\end{enumerate}
\end{enumerate}

\begin{figure}
  \centering
  \includegraphics[width=2.5in]{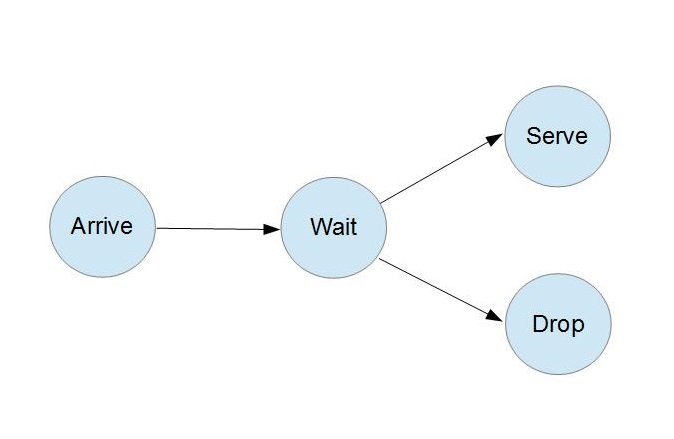}
  \caption[center]{System job processing flow}
\label{Flow2}

\end{figure}

We define a job as a tuple $J_i = < a_i, b_i, d_i, w_i, e_i>$.

\begin{itemize}
	\item $a_i, b_i, d_i$ and $w_i$ are realizations of $A_i, B_i, D_i$ and $W_i$.
	\item $e_i = t_i + d_i$
\end{itemize}

\begin{definition}\label{D1A}
The cumulative reward function for time $t$ and policy $\pi$ is:
\begin{equation}
U_{t}^\pi = \sum\limits_{J_i \in S_{t}^{\pi}}^{}w_i
\end{equation}

Let $t=\hat{t}_n$ and $t'=\hat{t}_{n-1}$ then, the reward difference function is:
 \begin{equation}
\Delta U_{t}^{\pi} = U_{t}^\pi - U_{t'}^\pi
\end{equation}
\end{definition}

The objective is to find a policy $\pi$ that maximizes the cumulative reward function.
If the rewards are deterministic and $W_i = 1$ the cumulative reward function measures the number of jobs that got service. If the rewards are $W_i=B_i$ the cumulative reward function measures the processing time utilization.

In Appendix \ref{AnnexA} queue behavior analysis is provided. As a result of this analysis, it is enough to examine the size of the queue potential and deduce the relationship between the corresponding size of $\hat{S}_{\hat{t}_n}^\pi$. Note that jobs that currently do not meet their deadline are disregarded here since they are not part of the queue potential.

\section{Scheduling Policies}
\label{sec:policies}
In this section we describe several relevant scheduling policies which treat queues with priorities and deadlines.
\subsection{Earliest Deadline First}
We define the EDF as follows: Let $t$ be the current time and assume that the queue is not empty and the server is ready to process a job.
\begin{tcolorbox}
\center{Earliest Deadline First policy}
\begin{enumerate}
	\item $J_i := \underset{J_j \in Q_{t}^\pi} {argmin} (e_j)$ .
	\item If $e_i < t$ then drop $J_i$
	\item else provide service to $J_i$
	\item Return to state 1
\end{enumerate}
\end{tcolorbox}

The EDF was proved to be optimal under different metrics  \cite{panwar1988optimal, bhattacharya1989optimal, katzirscheduling, stankovic2012deadline}. Because of this optimality the EDF became the standard policy used in queuing models with deadlines.

\HighLight{
Panwar, Towsley, and Wolf distinguish between service times being assigned  (A-i) at arrivals or (A-ii) at beginning of services.  Independently of when the assignment occurs, they assume that (A-iii) service times are not known to the scheduler at the beginning of the service. Then, they show that, under (A-iii), the two assignments (A-i) and (A-ii) are equivalent.  In this paper, in contrast, we consider the case where (A-iii) does not hold (see assumption \ref{A1}).
Then, the distinction between (A-i) and (A-ii) is important. To illustrate that knowing the service times upon arrivals can lead can lead to improved results compared to EDF we consider the following variation on EDF which we term MEDF.}

Assume that there are two jobs in the queue. $J_1$ has its expiration time at time 10 seconds and a service time of 30 seconds. $J_2$ has its expiration time at 20 seconds and a service time of 5 seconds. According to the EDF policy $J_1$ is processed before $J_2$. In this case $J_2$ is expired and is lost. If we change the order of $J_1$ and $J_2$ can be executed. Figure \ref{MEDF1} depicts this case.

\begin{figure}[h!]
  \centering
  \includegraphics[width=2.5in]{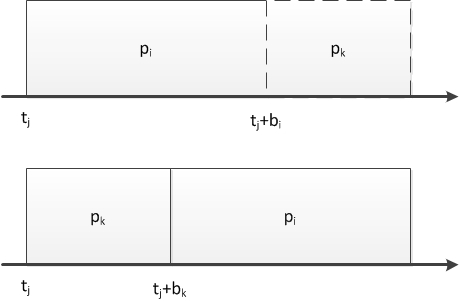}
  \caption[center]{Example of queue order}
\label{MEDF1}
\end{figure}

The MEDF policy is a simple policy based on EDF but uses knowledge of the job reward and service time. It replaces the order of the first two jobs in the queue if this is beneficial. We introduce it here in order to present the advantages of using  information on the service time.

Assume the current time is $t$. The queue has at least two jobs waiting and just finished processing a job and needs to choose a new job to process.

\begin{tcolorbox}
\center{MEDF}
\begin{enumerate}
	\item $J_i := \underset{J_k \in \hat{Q}_j^{MEDF}} {argmin} (e_k)$
	\item $J_l := \underset{J_k \in \hat{Q}_j^{MEDF}-\{J_i\}} {argmin} (e_k)$ 
	\item If $(e_i < t)$ then drop $p_i$
	\item else if $(e_i \geq (t + b_l)) \wedge (e_l < (t + b_i))  $ then provide service to $J_l$ 
	\item else provide service to $J_i$
	\item Return to state 1
\end{enumerate}
\end{tcolorbox}

\subsection{$c\mu/\theta$  policy}
Another policy which is relevant as a benchmark  for testing our algorithm is the $c\mu/\theta$ scheduling policy \cite{atar2010cmu}. The $c\mu/\theta$ policy assumes that are $K$ queues in the system where it needs to select the queue to be served. We assume that there are $K$ levels of rewards (or $K$ group of rewards) each against a queue. The $c\mu/\theta$ scheduling algorithm is composed of two steps:
\begin{enumerate}
\item Upon arrival the policy inserts the new job to queue $k$ according to its reward in FCFS order. 
\item If the server becomes idle the policy chooses the queue to be served according to $c_k\mu_k/\theta_k$ and process the job that is in the head of the queue.
\end{enumerate}

\begin{tcolorbox}
\center{$c\mu/\theta$ arrival}

{Assume new arrival of job $J_i$.}
\begin{enumerate}
	\item $k := w_i$
	\item Insert the $J_i$ at the end of queue $k$.
\end{enumerate}

\center{$c\mu/\theta$ service}
\begin{enumerate}
       \item $k_0 := \underset{k=1..K} {argmax} (\sum c_k\mu_k/\theta_k )$ 
	\item $i := min \{ l : J_l \in \hat{Q}_t^{k_0}\}$	
	\item If $(e_i < t)$ then drop $J_i$
	\item else provide service to $J_i$
	\item Return to state 1
\end{enumerate}
\end{tcolorbox}

The $c\mu-rule$ was proven  to be optimal in \cite{buyukkoc1985c} assuming geometric service times and a preemptive scheduler. In \cite{bispo2013single} Bispo claimed that for convex costs, the optimal $c\mu-rule$ policy depends on the individual loads. He proposed first-order differences of the single stage cost function that reaches near optimal performances. In the non-preemptive case, the $c\mu-rule$ was proved to be optimal  for the two-user case in \cite{kebarighotbi2009revisiting}. The proof was done using calculus of variations arguments as well as an Infinitesimal  Perturbation Analysis  (IPA) approach.  Down, Koole and Lewis \cite{down2011dynamic} provided an optimal server assignment policy, under reward, linear holding costs, and a class-dependent penalty which reduces the reward for each customer that abandons the queue.

In our model we assume that the service times, deadlines and rewards are know upon arrival. The EDF assumes that there is no information about the service times, while in $c\mu/\theta$ the deadline is known statistically in terms of the probability of abandonment. This information has advantages in the case of a non-deterministic service time.  We present a simple example showing the advantages of using knowledge of the service time upon arrival in the EDF case and present new version of $c\mu/\theta$ that exploits the knowledge of deadlines information.
In the case of the $c\mu/\theta$ policy, knowing deadlines upon arrival allows us to modify the queue order to use EDF instead of FCFS as was proposed in \cite{peha1996cost}. We mark this new version as $c\mu/\theta E$. However, as we will show later our proposed technique outperforms this variation as well.

%-----------------------------------------------------------------------------------------------------------------------------------------------------------------
\section{Maximum Utility with Dropping scheduling policy}
\label{sec:MUD}
We next present the Maximum Utility with Dropping (MUD) scheduling policy. The queuing model assumes \ref{B1}-\ref{B5}. We use the notation $\hat{S}_{t}^{M}, \hat{Q}_{t}^{M},..$ to present the MUD parameters and $\hat{S}_{t}^{E}, \hat{Q}_{t}^{E},..$ to present the EDF parameters.

The MUD policy combines both scheduling and dropping mechanisms. The scheduling algorithm is composed of two steps:
\begin{enumerate}
\item $\mbox{MUD}_a$ - Upon arrival the policy inserts the new job to the queue keeping the EDF order. If the insertion will cause a job to miss its deadline, the policy drops the packet with the minimal throughput ratio ($\frac{w_i}{b_i}$) from the queue.
\item $\mbox{MUD}_s$ - Upon terminating the service of a job the policy processes one of the two jobs in the head of the queue. The selected job is the one with the highest throughput ratio as long as it does not cause the second job to miss its deadline similarly to MEDF.
\end{enumerate}

Let $o(J_i, t)$ and $s(J_i, t)$ be functions that characterize the queue potential order of $J_i$ with policy MUD at time $t$.
\begin{itemize}
	\item $o(J_i ,t)$ is the index of job $J_i$ in $Q_t^{M}$. $o(J_i, t)=1$ means that $J_i$ is at the head of the queue.
	\item $s(J_i, t):= \sum\limits_{J_k:o(J_k, t) <o(J_i, t)}{b_k}$ is the time a job waits until it is processed assuming that no new jobs arrive until it starts processing.
\end{itemize}

Below we describe how the MUD policy handles a job arrival. Let $J_i$ be the new job which reaches to the queue $Q_{t_{i-1}}^{M}$  at time $t_i$.
\begin{tcolorbox}
\center{$\mbox{MUD}_a$}
\begin{enumerate}
	\item Wait for arrival of a new job ($J_i$)
	\item If  $Q_{t_i}^{M} = \emptyset$ and the server is idle then process $J_i$ and go to statement 1
	\item Add $J_i$ to the queue according to shortest time to expiry order. If there are already jobs with the same expiration time, order them in a descending order of their rewards
 	\item Find the first job ($J_k$) which misses its deadline due to the insertion of $J_i$ into the queue $( o(J_k, t_i) \geq o(J_i, t_i))$
 	   \begin{equation*}
		o_k :=
			\begin{cases}
				\underset{e_j< s(J_j, t_i) }\min(o(J_j, t_i)) &\text{ $\exists j: e_j \leq s(J_j, t_i)$ }
				\\
				\infty &\text{otherwise.}
			\end{cases}
	  \end{equation*}
	\item If $o_k := \infty$ then go to statement 1
	\item \label{mud5} Find the job ($J_l$) with the minimum reward per service time. If there are several pick the one with the shortest time to expiry. $J_l := \underset{o(J_j, t_i) \leq o_k} {argmin}(\frac{w_j}{b_j})$
	\item Drop job $J_l$ from the queue
	\item Go to statement 1
\end{enumerate}

\end{tcolorbox}

Note that $o(J_k, t_i)$ and $s( J_k, t_i)$ values change after adding a new job $J_i$ at time $t_i$  as follows  if $o(J_k, t_i) > o(J_i, t_i)$ then $s(J_k, t_i) := s(J_k, t_{i-1}) + b_i$ and $o(J_k, t_i) = o(J_{k}, t_{i-1})+1$ else there is no change. Different presentation of the policy is that it marks the jobs to drop at the arrival time and postpone the dropping stage to the time to the job reaches the head of the queue.

\begin{tcolorbox}
\center{$\mbox{MUD}_s$}
\begin{enumerate}
	\item Wait until $|Q_j^{M}| \ge 2$ and the server is idle. Assume the time is $t_j$.
	\item $J_i := \underset{J_k \in Q_j^M} {argmin} (e_k)$
	\item $J_l := \underset{J_k \in Q_j^M-\{J_i\}} {argmin} (e_k)$
	\item If $(e_i < t_j)$ then drop $J_i$
	\item else if $(e_i \geq (t_j + b_l)) \wedge (\frac{w_j}{b_j} < \frac{w_l}{b_l})  $ then process $J_l$
	\item else process $J_i$
	\item Return to state 1
\end{enumerate}
\end{tcolorbox}

Next we analyze the complexity of adding a new job to a queue in the MUD policy. The MUD policy uses its dropping policy to maintain the queue in such a way that $\hat{Q_{t}}^{M} = Q_{t}^{M}$. According to formula \ref{Qbound} we can bound the queue size. Adding a job to the queue and dropping jobs from the queue require one sequential pass over the queue. Since the ratio $\frac{w_m}{b_m}$ is calculated upon arrival the policy only needs to compare  $\frac{w_m}{b_m}$ and calculate the $s_i$.
From the above we can conclude that the complexity of adding a job to the queue is bounded by the size of the queue potential  as defined in \ref{Qbound}.

In the FIFO policy the complexity of adding a new job is O(1). In the EDF policy the complexity of adding a new job is $O(\log(\max{|\hat{Q_t^{F}}|}))$ and in the CBS policy the complexity of adding a new job is $O(\max{|\hat{Q_t^{C}}|})$ with a calculation of the exponent and multiplication as described in \cite{peha1996cost}.
The complexity of adding a new job to the queue in the MUD is larger than EDF but more efficient compared to the other policies. Using more complicated data structure the complexity can be reduced.

%----------------------------------------------------------------------------------------------------------------------
\section{MUD performance analysis}  % $EDF% 
\label{sec:perfanal}
\HighLight{ The following section presents the MUD performance analysis. This section analyzes the MUD scheduling policy performance in the $GI^+/D/1-GI^+/GI^+$ case. Then it proves the policy optimality in the $GI^+/D/1-GI^+/GI^/B+$ case.
}
\subsection{Deterministic Time Case}  % $EDF%
\label{sec:MUDdeter}
% New Update
\HighLight{
The analysis of the policies performance assumes that the queue becomes empty infinitely often. The assumption below forces this behavior.
\begin{enumerate}[label=A\arabic*]
\setcounter{enumi}{\value{AssumptionsCounter}}
	\item \label{A2} $\Bbb P(A_i > D_{max}) > 0$.
\setcounter{AssumptionsCounter}{\value{enumi}}
\end{enumerate}
Assumption \ref{A2} implies that there are infinitely many times where the queue is empty irrespective of the scheduling policy, i.e, for any two policies $\pi_1$ and $\pi_2$ there are $\tau_1, \tau_2,...\tau_{n},..$ times ($\tau_{n-1} < \tau_{n}$) which satisfy that both queues are empty; i.e. $Q_{\tau_n}^{\pi_1}=Q_{\tau_n}^{\pi_2}=\emptyset$.
}

\begin{definition}\label{D2A}
If $\forall n: U_{\tau_n}^{\pi_1} = U_{\tau_n}^{\pi_2}$ then policies $\pi_1$ and $\pi_2$ are equivalent.
\end{definition}

\begin{definition}\label{D2}
If $\Bbb P(\lim\limits_{n \rightarrow \infty} ((U_{\tau_n}^{\pi_1} - U_{\tau_n}^{\pi_2}) \geq 0) = 1$ then $\pi_1$ is as good as $\pi_2$
\end{definition}

\begin{definition}\label{D3}
If $\Bbb P(\lim\limits_{n \rightarrow \infty} (U_{\tau_n}^{\pi_1} - U_{\tau_n}^{\pi_2}) = \infty) = 1$ then $\pi_1$ is better than $\pi_2$
\end{definition}

\begin{definition}\label{D4} 
If for any policy $\pi_2$, $\pi_1$ is as good as policy $\pi_2$ then $\pi_1$ is optimal.
\end{definition}

% New Update ends

In the following section we analyze MUD performance against the group of policies that transmits the maximal number of packets. We show that MUD policy performance and the EDF policy performance in a $GI^+/D/1-GI^+/D$ is equal.  In this model the service times and the rewards are deterministic (one unit) which is equivalent to the $GI^+/D/1-GI$ queuing model and the cumulative reward function is the number of processed jobs. In \cite{panwar1988optimal} it was shown that EDF policy is optimal in the terms of number of jobs to be processed for a queuing model  satisfying $GI^+/D/1-GI$. We continue the analysis by showing that in a queuing model satisfying $GI^+/D/1-GI^+/GI^+$, the MUD policy is as good as the EDF policy or similar policies. We conclude with additional  assumptions  to show that the MUD policy is better than the EDF policy or similar policies.

\begin{lemma} \label{L24A}
For any queuing model satisfying $GI^+/D/1-GI^+/GI^+$, $\forall t: |Q_t^{M}| = |Q_t^{E}|$.
\end{lemma}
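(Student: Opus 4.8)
The plan is to collapse the equality of the two potential sizes onto a single \emph{policy-independent} quantity. Fix a maximal busy interval starting at a common empty epoch $\sigma$ (such epochs exist and recur by Assumption~\ref{A2}), and for $t$ in this interval let $A_t$ be the set of jobs that have arrived in $[\sigma,t]$. Because the service time is the deterministic unit (the $D$ of $GI^+/D/1$), a feasible schedule is determined solely by \emph{which} jobs are selected, and the feasible selections form the classical matroid of unit-length jobs with deadlines; write $\rho(A_t)$ for its rank, i.e. the largest number of jobs of $A_t$ that can all be started before expiry when the server runs back-to-back from $\sigma$. Crucially, $\rho(A_t)$ depends only on the arrival realizations and on $\sigma$, not on the policy. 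The lemma will follow once I show that for each $\pi\in\{M,E\}$ the set $\hat{S}_t^{\pi}\cup Q_t^{\pi}$ is a \emph{maximum} feasible selection from $A_t$, hence $|\hat{S}_t^{\pi}|+|Q_t^{\pi}|=\rho(A_t)$, together with $|\hat{S}_t^{M}|=|\hat{S}_t^{E}|$.

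I would establish both facts by one induction over the events \ref{E1}--\ref{E3}, carrying the joint invariant: (i) $|\hat{S}_t^{M}|=|\hat{S}_t^{E}|$, and (ii) $\hat{S}_t^{\pi}\cup Q_t^{\pi}$ is a maximum feasible subset of $A_t$ for each $\pi$. At $\sigma$ everything is empty, so the invariant holds. At a service completion the server is busy, which by (ii) happens exactly when $\rho(A_t)>|\hat{S}_t^{\pi}|$; since $\rho(A_t)$ and, by (i), $|\hat{S}_t^{\pi}|$ are common to both policies, the two servers are busy simultaneously, and as the service duration is the deterministic unit their completions fall at the same epoch. Each completion moves the head of the queue from $Q^{\pi}$ into $\hat{S}^{\pi}$, leaving the union $\hat{S}^{\pi}\cup Q^{\pi}$ (and hence $\rho(A_t)$) unchanged and preserving $|\hat{S}^{M}|=|\hat{S}^{E}|$. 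Thus service events maintain the invariant.

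The substance is the arrival event, where $A_t$ gains $J_i$ and $\rho$ either stays put or rises by one; here $|\hat{S}^{\pi}|$ is unchanged, so I only need each $\hat{S}^{\pi}\cup Q^{\pi}$ to stay a maximum feasible subset, i.e. a basis, of $A_t\cup\{J_i\}$. For EDF this is precisely the statement that inserting a job into the deadline-ordered list and discarding whatever can no longer be served reproduces the greedy earliest-deadline selection, a maximum feasible set; this is the intermediate-time form of the EDF optimality proved in \cite{panwar1988optimal}. For MUD the analogous statement is the heart of the matter, and I expect the main difficulty precisely here: one must show the $\mathrm{MUD}_a$ drop keeps the retained set a basis. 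Concretely $Q^{M}$ is feasible of size $\rho(A_t)$; if $Q^{M}\cup\{J_i\}$ is still feasible MUD keeps it and $\rho$ increases by one, otherwise MUD deletes the minimum-ratio job among the positions $1,\dots,o_k$ up to the first missed deadline. I would prove that deleting \emph{any} single job from positions $1,\dots,o_k$ restores feasibility: inserting $J_i$ (necessarily at a position $p\le o_k$) delays every later job by exactly one unit slot, so removing one job at a position $q\le o_k$ advances every subsequent job by one slot; for positions beyond $o_k$ this cancels the insertion delay and returns them to their originally feasible slots, while the formerly violating job at $o_k$ moves to slot $o_k-1$, whose start time $t+(o_k-2)$ is dominated by the already-feasible deadline $e_{(o_k-1)}\le e_{(o_k)}$. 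Hence the retained set is feasible of size $\rho(A_t)$, so a basis of $A_t\cup\{J_i\}$, which gives (ii) for MUD.

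Assembling the pieces, both unions have size $\rho(A_t)$ and the two processed sets have equal size throughout the busy period, whence $|Q_t^{M}|=\rho(A_t)-|\hat{S}_t^{M}|=\rho(A_t)-|\hat{S}_t^{E}|=|Q_t^{E}|$; at the empty epochs both sides are $0$, and Assumption~\ref{A2} lets the induction restart on the next busy period, giving equality for all $t$. The main obstacle, as indicated, is the slot-shifting lemma certifying that the value-based $\mathrm{MUD}_a$ deletion never destroys maximality; everything else is either bookkeeping or a restatement of the cited EDF optimality. The determinism of the service time is used twice: once to make ``feasible selection'' a matroid with a policy-independent rank, and once to synchronize the service completions of the two policies.
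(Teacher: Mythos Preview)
Your approach is genuinely different from the paper's and is conceptually appealing: you introduce a policy-independent rank $\rho(A_t)$ and argue that both $S_t^{E}$ and $S_t^{M}$ remain bases of this matroid throughout the busy period. The paper, by contrast, never names such a rank; it proves $|Q_t^M|=|Q_t^E|$ by a direct event-by-event induction carrying the joint hypothesis ``equal queue-potential sizes and synchronized service starts,'' and it invokes EDF optimality from \cite{panwar1988optimal} only to eliminate the two mismatched arrival cases. The case ``EDF drops while MUD grows'' is ruled out immediately by EDF's optimality; the case ``EDF grows while MUD drops'' is handled by a short ad hoc contradiction showing that the job MUD discards could then be appended to the EDF queue potential, again violating EDF optimality. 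Your route offers a cleaner explanation of \emph{why} the sizes coincide; the paper's route has the advantage that it never needs to certify that MUD's local feasibility test is equivalent to a global independence test.

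That equivalence is where your write-up has a gap. Your slot-shifting argument shows only that after an $\mathrm{MUD}_a$ drop the retained set is \emph{feasible} of the old size $\rho(A_t)$; to conclude it is a \emph{basis} of $A_t\cup\{J_i\}$ you also need $\rho(A_t\cup\{J_i\})=\rho(A_t)$, i.e.\ that $S_t^M\cup\{J_i\}$ is dependent. You take this for granted, but MUD's check is performed on $Q_t^M\cup\{J_i\}$ with the already-processed jobs $\hat S_t^M$ frozen in slots $\sigma,\ldots,\sigma+|\hat S_t^M|-1$; a priori the matroid might admit a feasible assignment that reshuffles those early slots. A related issue is the matroid you invoke: the ``classical matroid of unit-length jobs with deadlines'' ignores release times, and with that definition invariant~(ii) can fail outright (take $J_1$ arriving at $0$ with deadline $1.5$ and $J_2$ arriving at $0.1$ with deadline $0.5$; ignoring releases both fit in slots $0,1$ so your $\rho=2$, yet no non-idling policy can serve $J_2$, so $|S_t^{\pi}|=1$). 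You should use the transversal matroid that respects arrivals. The slot-shifting step survives because every queue job has release time before the first queue slot, but you then owe an argument that MUD's check with frozen history really detects dependence, for instance a Hall-type count showing the first $|\hat S_t^M|$ slots are saturated by arrivals in any feasible assignment. Once those two points are supplied, your proof is complete and arguably more transparent than the paper's direct comparison.
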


\begin{proof}
Since the lemma is related solely to the number of jobs that are processed, the rewards are not used in the proof.
The proof is based on the inequality between the two sides.

The first step is to prove that $\forall t: |Q_t^{M}| \leq |Q_t^{E}|$ .This inequality is derived from the optimality of EDF in a queuing model  satisfying $GI^+/D/1-GI$ as was proven in \cite{panwar1988optimal}.

For the other side $\forall n: |Q_{\hat{t}_n}^{M}| \geq |Q_{\hat{t}_n}^{E}|$ the proof is achieved by induction on $\hat{t}_n$. Let $t=\hat{t}_n$, the induction hypothesis is $|Q_{t}^{E}| = |Q_{t}^{M}|$ and the two policies provide service simultaneously (synchronized).

By definition at $n=0$ the queues are empty and the servers are idle $(Q_{t}^{E} = Q_{t}^{M} = \phi)$. At $n=1$ the first job arrives and is processed immediately by both policies, i.e., they are synchronized. This event is similar to all events when the queues are empty and the server is idle (\ref{QE1}).
By definition the service time is deterministic; thus the policies are synchronized as long as $Q_t^{E}$ and $Q_t^{M}$ have the same size.
For the induction step, assume that the processing start time is synchronized and $|\hat{Q}_{t}^{M}| = |Q_{t}^{E}| = k$.
We need to show that fot $t'=\hat{t}_{n+1}$,  $|Q_{t'}^{M}| = |Q_{t'}^{E}|$ is kept, and the service start time is synchronized in the following cases:

\begin{itemize}
\item Begin processing a job in one or both policies (\ref{QE5})
\item Arrival of a new job when the server is busy (\ref{QE2})
\end{itemize}

First we analyze the case of \ref{QE5}. Let $|Q_{t}^{E}| = |Q_{t}^{M}| > 0$.
When the event of process begins $\Delta Q_{t'}^{E} = \Delta Q_{t'}^{M} = -1$ . The processing begins simultaneously (induction hypothesis) and since the processing time is deterministic, both policies end processing simultaneously , thus adhering to the induction hypothesis.

Second we analyze the case of \ref{QE2}. As stated above \ref{QE2} contains two type of events: \ref{QE3} when the new job is added to the queue potential and the queue potential is growing and \ref{QE4} when the new job causes one of the jobs in the queue potential to miss its deadline.

The arrival of a new job does not impact the synchronization of the processing. We analyze events \ref{QE3} and \ref{QE4} in both  policies.

\begin{enumerate}
\item Similar behavior of the two policies in the event \ref{QE3}.
\item Similar behavior of the two policies in the event  \ref{QE4}.
\item $\Delta Q_{t'}^{E}=0, \Delta Q_{t'}^{M} = 1$, MUD event is \ref{QE3} while the EDF event is \ref{QE4}.
\item $\Delta Q_{t'}^{E}=1, \Delta Q_{t'}^{M} = 0$, MUD event is \ref{QE4} while the EDF event is \ref{QE3}.
\end{enumerate}
The first two cases preserve the induction hypothesis as required. The third case contradicts the optimality of the EDF; in other words the event cannot exist. We show that the fourth event cannot exist.

By definition $Q_{t}^{M} \subseteq \hat{Q}_{t}^{E}$ since they have the same feed of jobs and the queues are ordered by expiration time.
Let $J_j$ be the job that is dropped by the MUD policy at time $t$. If $J_j \in Q_{t}^{E}$,
 then, there are more jobs in $Q_{t}^{M}$ then in $Q_{t}^{E}$ whose expiry time is less than $e_j$. This  contradicts the optimality of EDF (please note that the order of the queues is the same). If $J_j \notin Q_{t}^{E}$  since $e_j \geq e_i$ we can add it to $Q_{\hat{t}_{n-1}}^{E}$ without dropping and have a larger queue since $J_i$ is added at time $t_i$. This contradicts the optimality of EDF. This means that case 4 cannot exist.
The result is that the processing is synchronized and the queues have a similar size.
\end{proof}

%------------------------------------------------------------------------------------------------------------------- L2
\begin{lemma}  \label{L2}
For any queuing model satisfying $GI^+/D/1-GI^+/D$, the MUD and the EDF policies are equivalent.
\end{lemma}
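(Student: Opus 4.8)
The plan is to reduce the claim directly to Lemma \ref{L24A} together with the observation that, under a deterministic reward distribution, the cumulative reward is nothing but a count of processed jobs. First I would record that in the $GI^+/D/1-GI^+/D$ model every reward equals the same constant (with the paper's convention $W_i = 1$), so that $U_t^\pi = |S_t^\pi|$ for every policy $\pi$ and every $t$. In particular, at each of the common empty-queue epochs $\tau_n$ guaranteed by assumption \ref{A2}, the queue potential is empty, $Q_{\tau_n}^\pi = \emptyset$, whence $S_{\tau_n}^\pi = \hat{S}_{\tau_n}^\pi$ and $U_{\tau_n}^\pi = |\hat{S}_{\tau_n}^\pi|$. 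Equivalence in the sense of Definition \ref{D2A} will therefore follow once I establish $|\hat{S}_{\tau_n}^M| = |\hat{S}_{\tau_n}^E|$ for all $n$.

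Next I would invoke Lemma \ref{L24A}, which gives $|Q_t^M| = |Q_t^E|$ for all $t$, and---crucially---the synchronization established inside its proof: because the service time is deterministic and the two queue potentials always have equal size, MUD and EDF begin and complete each service at exactly the same instants. Combined with the work-conserving assumption \ref{B3} (no forced idling), the two servers are then busy over identical time intervals and complete identical numbers of services in every busy period between consecutive epochs $\tau_{n-1}$ and $\tau_n$; equivalently, using the queue-potential/$\hat{S}$ relationship derived in Appendix \ref{AnnexA}, equal queue-potential sizes at all event times force equal counts of processed jobs at the synchronization epochs. A short induction on $n$ then yields $|\hat{S}_{\tau_n}^M| = |\hat{S}_{\tau_n}^E|$, and hence $U_{\tau_n}^M = U_{\tau_n}^E$ for every $n$, which is precisely Definition \ref{D2A}.

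I expect the only genuinely delicate point to be the bridge from ``equal queue-potential size at every event time'' to ``equal number of jobs actually processed.'' This is where the deterministic service time and the no-forced-idle assumption do the real work: they guarantee that the two policies never desynchronize, so the accumulated processed counts track the queue-potential sizes in lockstep rather than merely agreeing at isolated instants. Once synchronization is in hand the reward comparison is immediate, since with a deterministic reward the cumulative reward reduces to the processed count, and the two counts coincide at every $\tau_n$.
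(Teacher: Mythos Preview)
Your proposal is correct and follows essentially the same route as the paper: reduce to $W_i=1$, invoke Lemma~\ref{L24A} to get $|Q_t^M|=|Q_t^E|$ for all $t$, and then pass from equal queue-potential sizes to equal processed counts so that $U_{\tau_n}^M=U_{\tau_n}^E$. The only cosmetic difference is that where you re-argue the synchronization and then mention the Appendix relationship as an alternative, the paper simply cites Lemma~\ref{L1} (applied in both directions, since $|Q_t^M|=|Q_t^E|$ gives each inequality) to obtain $|\hat S_t^M|=|\hat S_t^E|$ for all $t$, not merely at the $\tau_n$; your more careful restriction to the empty-queue epochs is exactly what Definition~\ref{D2A} requires and is arguably cleaner.
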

\begin{proof}
Since the reward is deterministic, without loss of generality assume $W_i=1$. Therefore,  from lemma \ref{L1} and lemma \ref{L24A} we get that  $\forall t : U_t^{E} =|\hat{S}_t^{E}| = |\hat{S}_t^{M}|=U_t^{M}$ thus MUD and EDF are equivalent.
\end{proof}

\begin{theorem}  \label{T10}
For any queuing model satisfying $GI^+/D/1-GI^+$, the MUD policy is optimal in terms of the number of jobs to process where the process time is exactly one unit.
\end{theorem}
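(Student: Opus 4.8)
The plan is to reduce the job-counting statement to the deterministic-reward subcase already settled in Lemma \ref{L2} and then inherit optimality from EDF. Counting processed jobs is precisely the cumulative reward $U_t^\pi$ once every reward is fixed to $W_i = 1$, and with this choice the model $GI^+/D/1-GI^+$ collapses to the deterministic-reward instance $GI^+/D/1-GI^+/D$ to which Lemma \ref{L2} applies. So the first step is to invoke Lemma \ref{L2} to conclude that MUD and EDF are equivalent in the sense of Definition \ref{D2A}, giving $U_{\tau_n}^{M} = U_{\tau_n}^{E} = |\hat{S}_{\tau_n}^{M}| = |\hat{S}_{\tau_n}^{E}|$ at every regeneration time $\tau_n$ at which the queue empties; such times occur infinitely often by assumption \ref{A2}, so the limits appearing in the optimality definitions are well posed.

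Next I would bring in the external optimality of EDF. In \cite{panwar1988optimal} EDF is optimal in terms of the number of processed jobs for the model $GI^+/D/1-GI$, which is exactly the present model after setting $W_i = 1$ and interpreting $U_t^\pi$ as $|\hat{S}_t^\pi|$. Hence for an arbitrary competing policy $\pi$ one has $\Bbb P(\lim_{n \rightarrow \infty}(U_{\tau_n}^{E} - U_{\tau_n}^{\pi}) \geq 0) = 1$. Substituting the equality $U_{\tau_n}^{M} = U_{\tau_n}^{E}$ from the previous step yields $\Bbb P(\lim_{n \rightarrow \infty}(U_{\tau_n}^{M} - U_{\tau_n}^{\pi}) \geq 0) = 1$ for every $\pi$, which is exactly the condition of Definition \ref{D4} stating that MUD is optimal. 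This chaining through the intermediary EDF is what lets the equivalence of Lemma \ref{L2} upgrade to optimality against \emph{all} admissible policies.

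I do not expect a genuine obstacle at this level, because the substantive work has already been discharged in Lemma \ref{L24A}: the induction establishing $|Q_t^{M}| = |Q_t^{E}|$ for all $t$ is the real crux, and once that equality of queue-potential sizes is combined with the Appendix \ref{AnnexA} accounting that ties $|\hat{S}_{\hat{t}_n}^\pi|$ to the queue-potential trajectory, the present theorem is a direct corollary. The only point I would treat with care is the reduction itself: one should note that fixing $W_i = 1$ makes the ratio $w_i/b_i$ constant across all jobs, so the MUD ordering and dropping rules (which then break ties by shortest time to expiry) coincide with the EDF ordering, ensuring that the equivalence of Lemma \ref{L2} genuinely transfers to the job-count metric rather than to the reward metric alone.
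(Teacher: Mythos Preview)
Your proposal is correct and follows essentially the same route as the paper: reduce job-counting to the deterministic-reward model $GI^+/D/1-GI^+/D$, invoke Lemma~\ref{L2} for the equivalence of MUD and EDF, and then inherit optimality from the Panwar--Towsley--Wolf result in \cite{panwar1988optimal}. The paper adds one small remark you omit, namely that under deterministic service time the distinction between knowing $B_i$ upon arrival versus at the start of service disappears (so \cite{panwar1988optimal}, which assumes the scheduler does \emph{not} know service times, still applies); you may wish to include that sentence for completeness.
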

\begin{proof}
Queuing models satisfying $GI^+/D/1-GI^+/D$ are equivalent to queuing systems satisfying  $GI^+/D/1-GI$ if we measure the number of jobs that were processed and $W_i = 1$. In the case of a deterministic service time, knowing the service time upon arrival or just before service is equivalent since both are known to be a constant.
In \cite{panwar1988optimal} it was shown that the EDF policy is optimal in terms of the number of jobs to process for the discrete time $GI^+/D/1-GI$ queue where process time is exactly one unit. In lemma \ref{L2} we showed that the MUD is equivalent to EDF; thus MUD is optimal in terms of the number of jobs to process where the process time is exactly one unit and the reward is also one unit.
\end{proof}

Let $\mathcal{M}$ be the group of policies that processes the maximal number of jobs assuming $GI^+/D/1-GI^+/GI^+$. $EDF, MUD \in \mathcal{M}$ as a result of \cite{panwar1988optimal} and Lemma \ref{L2} .
 Let $\mathcal{M}_E \subseteq \mathcal{M}$ be the group of policies that their expected reward is $E(W_i)$.

%Update Begin
\HighLight{
In order to prove lemma \ref{LT2} we add the following assumptions: 
}
\begin{enumerate}[label=A\arabic*]
\setcounter{enumi}{\value{AssumptionsCounter}}
	\item \label{A3} $A_i$, $D_i$ and $W_i$ have discrete probability distribution function.
	\item \label{A21} $A_i, D_i$ and $W_i$ are independent of each other.
\setcounter{AssumptionsCounter}{\value{enumi}}
\end{enumerate}
\HighLight{
Assumption \ref{A3} is used to simplify the proofs. In addition $W_i$ has finite presentation in the communication environment forcing discrete distribution. $A_i$ and $D_i$ may have a continuous distribution function as well. 
Assumption \ref{A21} is reasonable since $A_i$ is related to the server's arrival process. $D_i$ is derived from the original deadline reducing different network delays. $W_i$ is generated independently by the data source. 
}

%Update Ends

%------------------------------------------------------------------------------------------------------------------- L3
\begin{lemma}\label{LT2}
For any queuing model satisfying $GI^+/D/1-GI^+/GI^+$ and assumptions 
\HighLight{
\ref{A2}-\ref{A21}},  MUD policy is as good as EDF.
\end{lemma}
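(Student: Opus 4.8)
The plan is to compare the rewards collected by MUD and EDF along the sequence of regeneration times $\tau_n$ at which both queues are simultaneously empty (these exist by Assumption~\ref{A2}). By Lemma~\ref{L24A} we already know $|Q_t^{M}| = |Q_t^{E}|$ for all $t$, so both policies process exactly the same \emph{number} of jobs and hence belong to $\mathcal{M}$. The key observation to exploit is that MUD and EDF process the same queue potential and differ only in \emph{which} jobs they retain when a dropping decision is forced: EDF (as STE) drops whatever job is about to expire, whereas MUD's dropping rule in $\mbox{MUD}_a$ (statement~\ref{mud5}) always discards the job with the minimal reward-per-service-time $\frac{w_j}{b_j}$ among the candidates. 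Since service times are deterministic ($b_j \equiv 1$), this reduces to MUD dropping the job with the minimal \emph{reward} $w_j$ within the same-deadline tie group, while EDF's choice within that group is arbitrary (or FCFS).

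First I would set up the induction from Lemma~\ref{L24A} but now tracking the reward totals rather than just cardinalities. At each regeneration-to-regeneration block the two policies process an identical set of deadline slots; the only difference arises when a drop is forced because a newly arrived job $J_i$ would cause some job to miss its deadline. At such an event MUD removes the minimum-reward job among those competing for the contested slot, so by construction the reward it forfeits is no larger than the reward EDF forfeits when it is forced to drop a job from the same competing set. I would argue that this gives a per-event pointwise inequality $\Delta U_{\hat t_n}^{M} \geq \Delta U_{\hat t_n}^{E}$ \emph{in distribution} or, more carefully, that the \emph{multiset} of rewards retained by MUD stochastically dominates the multiset retained by EDF over each regeneration cycle, because both retain the same number of jobs but MUD never discards a higher reward than necessary.

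The cleanest route, given that Definition~\ref{D2} only asks for ``as good as'' (the limiting difference is nonnegative with probability one), is to show that accumulated over each regeneration block the MUD reward minus the EDF reward is a sum of nonnegative increments. Concretely, I would argue that within each block both policies commit the same number of service slots to jobs with the same deadlines, and MUD's greedy minimal-$w_j$ dropping guarantees that among jobs with equal deadlines MUD always keeps the higher-reward ones; using Assumption~\ref{A21} (independence of $W_i$ from $A_i, D_i$) the rewards can be coupled to the deadline/arrival structure so that the extra flexibility MUD exploits never costs reward. Summing the nonnegative block contributions and taking $n \to \infty$ yields $U_{\tau_n}^{M} - U_{\tau_n}^{E} \geq 0$ with probability one, which is exactly Definition~\ref{D2}.

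The main obstacle I anticipate is making rigorous the claim that MUD's \emph{local} greedy drop (minimum reward-per-time at a single arrival event) is globally reward-nondecreasing relative to EDF across an entire regeneration cycle. A naive exchange argument can fail because a job MUD keeps now might itself be dropped later, so the two policies' retained sets are not related by a single swap. I expect to need an exchange/coupling argument that processes the forced-drop events in temporal order and maintains the invariant that, at every event, the set of jobs surviving under MUD can be matched to the surviving set under EDF by a reward-nondecreasing bijection respecting deadlines. Establishing that this invariant is preserved across both the $\mbox{MUD}_a$ drop and the $\mbox{MUD}_s$ reordering steps is the delicate part; once the invariant holds at every $\hat t_n$, evaluating it at the empty-queue times $\tau_n$ gives the result immediately since the queue potential is then empty and $U_{\tau_n}^{\pi}$ counts precisely the rewards of all completed jobs.
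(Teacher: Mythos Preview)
Your approach and the paper's are genuinely different. You aim for a sample-path comparison: maintain a reward-nondecreasing, deadline-respecting bijection between the surviving jobs under MUD and EDF at every event, so that $U_{\tau_n}^{M}\ge U_{\tau_n}^{E}$ holds pathwise on each regeneration block. The paper does not attempt any such coupling. Instead it observes that $\Delta U_t^{M}=\Delta U_t^{E}$ at all event types except \ref{QE4}; at a \ref{QE4} event it has $\Delta U_t^{M}\ge 0$ deterministically (MUD drops the minimum-reward candidate, or the arrival itself), while $\Delta U_t^{E}=W_i-W_j^{E}$ with $W_j^{E}$ the reward of whichever job EDF loses. Because EDF orders purely by deadline and, by \ref{A21}, rewards are independent of deadlines and arrivals, $W_j^{E}$ has the same law as a generic $W_i$; the strong law of large numbers then drives the empirical mean of $\Delta U_{\tau_m}^{E}$ to zero almost surely. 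The ``as good as'' conclusion is obtained only in the limit via this averaging, never pointwise. In particular the paper allows individual \ref{QE4} events at which EDF does strictly better than MUD; your per-event inequality $\Delta U_{\hat t_n}^{M}\ge \Delta U_{\hat t_n}^{E}$ is false on sample paths, as you partly anticipate.

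The gap in your proposal is precisely the obstacle you name but do not resolve. You correctly note that the local greedy drop need not be globally reward-nondecreasing across a cycle, and you propose to fix this with an exchange/bijection invariant---but you give no construction of the bijection and no proof that it survives either the $\mbox{MUD}_a$ drop or the $\mbox{MUD}_s$ head-swap. Everything in your plan rests on that invariant, so as written the proposal is a sketch of a strategy rather than a proof. The paper's route avoids this difficulty entirely: it never compares the two dropped rewards at a given event, only the long-run average of each stream separately, which is why \ref{A21} is load-bearing there and only decorative in your outline. If you wish to pursue the sample-path argument you must actually build the bijection and verify its preservation; otherwise the LLN argument in the paper is both shorter and complete.
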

\begin{proof}
This proof analyzes the behavior of the different types of events.  For purposes of this proof let $\pi \in \{ MUD, EDF \}$ and let $t = \hat{t}_n$ the time of event $\mathcal{E}_n^{\pi})$ and $t'=\hat{t}_{n-1}$ the time of event $\mathcal{E}_{n-1}^{\pi}$ be the times of the current event and the prior event.
The different cases are:
\begin{enumerate}
\item Arrival of a new job at events  \ref{QE1} and \ref{QE2} where $\Delta Q_{t}^{E} \ne \Delta Q_{t}^{M}$. This contradicts Lemma \ref{L24A} since one of the $Q_{t}^\pi$ becomes longer than the other one.
\item \label{L2A}Arrival of a new job where $\Delta Q_{t}^{E} = \Delta Q_{t}^{M} = 1$ (\ref{QE3}). In this case a job was added to both queue potentials without reducing the number of jobs in them. The queue potential reward growth is the same in both queues.
\item \label{L3A} Arrival of a new job where $\Delta Q_{t}^{E} = \Delta Q_{t}^{M} = 0$ (\ref{QE4}). In this case  there is a change in the queue potential rewards which can be different between policies.
\item Processing begins of a job in one or both policies (\ref{QE5}) - Process starting moves the job from $Q_{t'}^\pi$ to $\hat{S}_{t}^\pi$ keeping $S_{t'}^{\pi}=S_{t}^{\pi}$.
\item A job does not meet its deadline (\ref{QE6}) or is dropped (\ref{QE7}) - By definition this job did not belong to $S_{t'}^{\pi}$.
\end{enumerate}

We define $\Delta U_{t}^{\pi}$ explicitly and compare the difference between the two policies. For arrival events \ref{QE1}, \ref{QE3} and \ref{QE4} let $J_i$ be the arriving job and $t=\hat{t}_n=t_i$ be the current time. We use $W_j^{E}$ or $W_k^{M}$ to describe the rewards of jobs $J_j$ and $J_k$ that are removed or dropped from the queue potentials in the different policies.

Let $\Delta U_{t}^{\pi}$ be:
\begin{equation}
\Delta U_{t}^{E} =
		\begin{cases}	
			0 					& 	\text{Events \ref{QE5}, \ref{QE6} and \ref{QE7}}
			\\
			W_i 					& 	\text{Events \ref{QE1}, \ref{QE3}}
			\\
		 	W_i - W_j^{E}		&	\text{Event \ref{QE4}}
		\end{cases}
\end{equation}
\begin{equation}
\Delta U_{t}^{M} =
		\begin{cases}
			0 					& 	\text{Events \ref{QE5}, \ref{QE6} and \ref{QE7}}
			\\
			W_i 					& 	\text{Events \ref{QE1}, \ref{QE3}}
			\\
			W_i - W_k^{M} 	& 	\text{Event \ref{QE4} and } W_k^{M} < W_i
			\\
			 0 					&  	\text{Event \ref{QE4} and } W_k^{M} \geq W_i
		\end{cases}
\end{equation}

We are interested in the difference between the potential rewards,
$\Delta U_{t}^{M}$ and $\Delta U_{t}^{E}$. In the events \ref{QE1}, \ref{QE3}, \ref{QE5}, \ref{QE6} and \ref{QE7}, $\Delta U_{t}^{M}=\Delta U_{t}^{E}$ as was presented above. We focus on the behavior of events of type \ref{QE4} which impacts the queue potential rewards.

Let $\tau = \{\tau_1, \tau_2...\}$ be a sequence of times satisfying \ref{QE4} and let $t_i = \tau_m$.
Let $\hat{\Bbb E}(W_i)$ be the empirical mean of the sequence of  all events of type \ref{QE4} up to time $\tau_m = t_i$.

We begin by analyzing the queue potential behavior in the EDF. In this event an arbitrary job from $Q_{t_{i-1}}^{E} \cup \{J_i\}$ misses its deadline and is removed from the queue potential. The EDF policy orders the queue according to expiration time. Hence the job that misses its deadline only depends on its deadline. Since the deadlines and the rewards are independent 
\HighLight{
according to assumption \ref{A21}
} and rewards are i.i.d  we get:

\begin{equation}\label{eqEDF}
\begin{split}
\lim\limits_{m \rightarrow \infty} \hat{\Bbb E} & (\Delta U_{\tau_m}^{E})
 = \lim\limits_{m \rightarrow \infty} \hat{\Bbb E} (W_i - W_j^E)\\
& = \lim\limits_{m \rightarrow \infty} \hat{\Bbb E} (W_i) - \lim\limits_{m \rightarrow \infty} \hat{\Bbb E} (W_j^E) \\
& \xrightarrow{a.s} \Bbb E (W_i) - \Bbb E (W_j) = 0
\end{split}
\end{equation}
This means that $EDF \in \mathcal{M}_E$.

On the other hand the MUD policy drops a job with the minimal reward from $Q_{t'}^{M} \cup \{J_i\}$ as defined in statements 5-8 of the policy. Let $J_k^M$ be the job that is chosen to be dropped in the MUD case.  Let $n$ be the number of events in sequence $\tau$ which occur  up to event $t_i$.
\begin{equation}
\begin{split}
\hat{\Bbb E} (\Delta U_{\tau_m}^{M}) &
=\frac{1}{m}\sum\limits_{i \leq m}\Delta U_{\tau_{\hat{i}}}^{M}\\&
=\frac{1}{m}(\sum\limits_{\hat{i} \leq m:W_k \geq W_i }\Delta U_{\tau_{\hat{i}}}^{M}+\sum\limits_{\hat{i} \leq m:W_k < W_i } U_{\tau_{\hat{i}}}^{M})\\&
=\frac{1}{m}\sum\limits_{\tau_{\hat{i}} \leq \tau_{m}:W_k < W_i } U_{\tau_{\hat{i}}}^{M} \geq \frac{1}{m}\sum\limits_{\tau_{\hat{i}} \leq \tau_{m}:W_k < W_i } \delta_W \geq 0
\end{split}
\end{equation}

If $\lim\limits_{m \rightarrow \infty} \hat{\Bbb E} (\Delta U_{\tau_m}^{M})  \xrightarrow{a.s} 0$ then the probability of the event that $W_k < W_i$ in the sequence is 0; i.e., that the number of beneficial exchanges achieved by the MUD is negligible. The MUD policy orders the queue by expiration time exactly like EDF; thus, both policies behave identically at probability 1. Alternatively $\lim\limits_{m \rightarrow \infty} \hat{\Bbb E} (\Delta U_{m}^{E})  \xrightarrow{a.s} \epsilon > 0$.

The conclusion is that the MUD policy is at least as good as the EDF policy.
\end{proof}

\begin{theorem}\label{T2}
For any queuing model satisfying $GI^+/D/1-GI^+/GI^+$ and assumptions 
\HighLight{
\ref{A2}-\ref{A21}
}, MUD policy is as good as any policy $\pi \in \mathcal{M}_E$.
\end{theorem}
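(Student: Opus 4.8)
The plan is to reduce Theorem \ref{T2} to an aggregate reward comparison at the common regeneration times $\tau_n$ guaranteed by assumption \ref{A2}, exploiting the fact that both MUD and any $\pi \in \mathcal{M}_E$ belong to $\mathcal{M}$ and hence process the \emph{same}, maximal number of jobs. First I would observe that at each $\tau_n$ the queues of both policies are empty, so every job that arrived in $[0,\tau_n]$ has by then been either processed or dropped. Since both $\pi$ and MUD lie in $\mathcal{M}$, each drops the minimal possible number of jobs, and therefore $|\hat{S}_{\tau_n}^{M}| = |\hat{S}_{\tau_n}^{\pi}|$ for every $n$. This fixes the processed-job count and reduces the whole problem to comparing the average reward collected per processed job.

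Second, I would invoke the defining property of $\mathcal{M}_E$ together with the strong law of large numbers. By definition $\pi \in \mathcal{M}_E$ attains expected reward $\Bbb E(W_i)$ per processed job; since the rewards are i.i.d.\ and, by \ref{A21}, independent of the deadlines that govern which jobs survive, the empirical mean of the rewards that $\pi$ actually collects converges almost surely, so that $U_{\tau_n}^{\pi}/|\hat{S}_{\tau_n}^{\pi}| \xrightarrow{a.s} \Bbb E(W_i)$, i.e.\ $U_{\tau_n}^{\pi} \approx |\hat{S}_{\tau_n}^{\pi}|\,\Bbb E(W_i)$. This is precisely the generic analogue of the computation in equation \ref{eqEDF}, where EDF was shown to realize exactly the mean-reward baseline and hence to lie in $\mathcal{M}_E$.

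Third, I would reuse the increment bookkeeping already established in the proof of Lemma \ref{LT2}, simply replacing EDF by the generic mean-reward policy $\pi$. There it was shown that the only events affecting the queue-potential reward differently are the type-\ref{QE4} events, and that on these MUD drops the minimum-reward job, giving $\hat{\Bbb E}(\Delta U_{\tau_m}^{M}) \geq 0$ in the limit, whereas the mean-reward baseline contributes $0$. Consequently MUD's empirical average reward per processed job is at least $\Bbb E(W_i)$, so $U_{\tau_n}^{M} \geq |\hat{S}_{\tau_n}^{M}|\,\Bbb E(W_i)$ asymptotically. Combining the three steps with $|\hat{S}_{\tau_n}^{M}| = |\hat{S}_{\tau_n}^{\pi}|$ yields $\lim_{n \to \infty}(U_{\tau_n}^{M} - U_{\tau_n}^{\pi}) \geq 0$ almost surely, which is exactly Definition \ref{D2}: MUD is as good as $\pi$.

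The main obstacle I anticipate lies in the second step rather than the first or third. One must pin down what the hypothesis ``$\pi$ has expected reward $\Bbb E(W_i)$'' actually asserts, so that the almost-sure convergence of $\pi$'s empirical reward average along the regeneration sequence is legitimate: concretely, that membership in $\mathcal{M}_E$ forces the set of jobs $\pi$ retains to be asymptotically \emph{unbiased} with respect to reward (selection independent of $W_i$), which is what lets the strong law apply to the collected rewards. Once that is granted, the remainder is a verbatim transcription of the Lemma \ref{LT2} argument with EDF replaced by $\pi$.
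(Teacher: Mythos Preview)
Your proposal is correct and follows essentially the same route as the paper. The paper's proof is extremely terse (three sentences): it invokes the definition of $\mathcal{M}_E$, Dertouzos' theorem, and Lemma~\ref{LT2}, leaving the reader to fill in exactly the logic you spell out --- that all $\pi\in\mathcal{M}_E$ realise the same asymptotic reward $|\hat S_{\tau_n}^\pi|\,\Bbb E(W_i)$, that MUD meets or exceeds this baseline by the \ref{QE4}-event analysis of Lemma~\ref{LT2}, and that the processed-job counts coincide. Your explicit identification of the one genuine subtlety (what ``expected reward $\Bbb E(W_i)$'' must mean operationally for the SLLN step to go through) is apt; the paper simply takes this as the defining property of $\mathcal{M}_E$ and does not justify it further.
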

\begin{proof}
By definition  $\forall \pi \in \mathcal{M}_E:$ their expected reward is $E(W_i)$. From Dertouzos theorem \cite{stankovic2012deadline} we know that EDF is optimal in that if there exists any policy $\pi$ that can build a feasible schedule then EDF also build a feasible schedule. Adding to it lemma \ref{LT2} we deduce that MUD policy is as good as any policy $\pi \in \mathcal{M}_E$.
\end{proof}

%Begin update
\HighLight{
In order to prove that the MUD policy is better than any policy $\pi \in \mathcal{M}_E$ the following assumptions are required:
}

\begin{enumerate}[label=A\arabic*]
\setcounter{enumi}{\value{AssumptionsCounter}}
	\item \label{A4} $\Bbb P( B_i < D_{max}) > 0$.
	\item \label{A41} $\exists \delta >0: \Bbb P(A_i = B_{min}-\delta ) > 0$, Let $A_{\delta} = B_{min} - \delta$
	\item \label{A6} $\Bbb Var( W_i ) \neq 0$.
	\item \label{A8} Let $\delta_W$ be the minimal difference between two different rewards. \ref{A3} and \ref{A6} imply that $\delta_W > 0$.
	\setcounter{AssumptionsCounter}{\value{enumi}}
\end{enumerate}

\HighLight{
Assumption \ref{A4} guarantees that there is always a positive probability that a feasible job arrives. This is a natural assumption, since otherwise the will be no jobs to process. Assumption \ref{A41} guarantees that jobs wait in the queue with positive probability irrespective of the scheduling policy. \ref{A8} always holds when $W_i$ has a discrete distribution with finite support.
}

%End update

 %-------------------------------------------------------------------------------------------------------------------
\begin{lemma}\label{L6}
For any queuing model satisfying $GI^+/D/1-GI^+/GI^+$ and assumptions \ref{A1} - \ref{A8},
there exists an arrival $\hat{\mathcal A}_{j(n+2)}^\pi= \{\ref{QE4}\}$ which occurs infinitely often such that $\Delta U_{t_{j(n+2)}}^{M} = \delta_W$.
\end{lemma}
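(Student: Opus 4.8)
The plan is to exhibit one finite arrival pattern that forces MUD to make a strictly beneficial drop of value exactly $\delta_W$, to show this pattern has strictly positive probability, and then to invoke the regeneration structure guaranteed by \ref{A2} to conclude that it recurs infinitely often almost surely. I would first fix a regeneration epoch at which the queue is empty; by \ref{A2} such epochs occur infinitely often, and each one starts a fresh, probabilistically independent renewal cycle (this also supplies the times $\tau_n$ used throughout the section). The strict positivity supplied by this lemma is exactly the $\epsilon>0$ alternative left open at the end of the proof of Lemma \ref{LT2}, so that the present statement is what upgrades ``as good as'' to ``better than.''

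Starting from such an empty-queue epoch, I would build the template as a run of consecutive arrivals whose inter-arrival times all equal $A_\delta = B_{min}-\delta$ (attainable with positive probability by \ref{A41}), so that, with deterministic service time $B_i=1$, arrivals accumulate faster than the single server can clear them and the waiting set grows regardless of policy. I would give every job in the run a common feasible deadline value $d^\ast$ that lies in the support of $D_i$ and satisfies $d^\ast<D_{max}$ (available with positive probability by \ref{A3} and \ref{A4}); once the run is long enough relative to $d^\ast$, the queue potential can no longer admit a new job without some job missing its IBS deadline, so the next insertion is an event of type \ref{QE4} (a forced drop with $\Delta Q_{t}=0$) rather than a growth event \ref{QE3}. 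This triggering arrival is the one indexed $\hat{\mathcal A}_{j(n+2)}^\pi$. Finally, using \ref{A3} and \ref{A6} (so at least two distinct reward values occur with positive probability) together with \ref{A8}, I would assign the waiting job that MUD elects to drop the reward $w$ and the triggering arrival $J_i$ the reward $w+\delta_W$. Since $B_i=1$ makes $w/b=w$, the MUD dropping rule removes the minimum-reward job, giving $\Delta U_{t_{j(n+2)}}^{M}=W_i-W_k^{M}=\delta_W$ as required.

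Because every realization invoked above -- the long empty gap of \ref{A2}, the short $A_\delta$ inter-arrivals, the common feasible deadline $d^\ast$, and the two prescribed reward values -- has strictly positive probability, and because the attributes are mutually independent by \ref{A21}, the whole template has some fixed positive probability $p_0>0$ inside each renewal cycle. The empty-queue epochs partition the time axis into infinitely many independent cycles, so the template is attempted infinitely often, each time succeeding with probability at least $p_0$; by the second Borel--Cantelli lemma the beneficial \ref{QE4} event with $\Delta U_{t}^{M}=\delta_W$ occurs in infinitely many cycles almost surely.

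The step I expect to be the main obstacle is the deterministic bookkeeping in the middle paragraph: pinning down the run length and the deadline value $d^\ast$ so that the insertion of $J_i$ produces a genuine \ref{QE4} event (forced drop, $\Delta Q=0$) rather than a harmless \ref{QE3} growth, and confirming that the MUD tie-breaking rule (minimum $w/b$, ties broken by earliest expiry) selects precisely the job carrying reward $w$ within the affected prefix of the queue potential. This amounts to verifying, via the EDF ordering that MUD maintains, that exactly one job fails its IBS deadline after insertion and that the minimum-reward job lies in the prefix up to the overloaded position. Once this configuration is fixed for $B_i=1$, the positivity of its probability and the almost-sure recurrence follow routinely from the renewal argument above.
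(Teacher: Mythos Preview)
Your proposal is correct and follows essentially the same template as the paper: force an empty queue via \ref{A2}, pile up arrivals with inter-arrival $A_\delta$ until the next insertion is necessarily a \ref{QE4} event, arrange the rewards so that MUD's drop is strictly beneficial, then use independence (\ref{A21}) and the second Borel--Cantelli lemma for almost-sure recurrence.

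The paper's construction differs from yours only in its specific parameter choices, and those choices dissolve the obstacle you flag in your last paragraph. Rather than trying to assign reward $w$ to \emph{the particular job MUD will drop} and $w+\delta_W$ to the trigger, the paper gives \emph{every} job in the run deadline $D_{max}$ and reward $W_{min}$, with only the final triggering arrival carrying $W_{max}$. Since all candidates for dropping then share the same minimal reward, there is no need to predict which one MUD's tie-breaking rule selects; whichever is dropped yields a gain of $W_{max}-W_{min}\geq\delta_W$. The paper also pins down the run length explicitly as $n=\lceil D_{max}/\delta\rceil$, which makes the overflow at step $n+2$ deterministic and removes the bookkeeping you were worried about. (Your invocation of \ref{A4} for the deadline choice is a slight misattribution---\ref{A4} concerns $B_i$, not $D_i$---but with deterministic service this is harmless; the paper simply uses $D_{max}$ throughout.)
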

\begin{proof}
In order to prove the lemma we generate a sequence of job arrivals called $\tilde{\mathcal A}_i^\pi \subset \mathcal A$ with strictly positive probability. We prove that the last event of sequence is of type \ref{QE4} regardless the scheduling policy. 
The sequence of jobs is defined as follows:
\begin{enumerate}
	\item Let $J_i=<a_i,1, D_{max},W_{min}, t_i+D_{max}>$ where $a_i  > D_{max}$ be the first job in the sequence.
\item Let $\{<A_{\delta},1, D_{max},W_{min}, t_k+D_{max}>\} _{k=i+1}^{i+n}$  be a sequence of $n$ jobs with constant parameters. Let $n$ be defined by:
\begin{equation} \label{eq2}
n= \lceil \frac{\ D_{max} }{\delta}\rceil
\end{equation}
\item Let $J_{i+n+1}=<A_{\delta},1, D_{max},W_{max}, t_{i+n+1}+D_{max}>$ be the last job in the sequence.
\end{enumerate}
The first arrival empties the queue potential with positive probability by assumption \ref{A2}. By assumption \ref{A4}, if $a_k = A_{\delta}$ the queue is increased upon arrival of $J_k$. After $n+1$ arrivals there is at least one job waiting in the queue when $J_{i+n+1}$ arrives.  Meaning that one job must miss its deadline ($\Delta Q_{\tau_{i+n+1}}^\pi=0$). 
The probability of the arrival of $J_k \in \{<A_{\delta},1, D_{max},W_{min}, t_k+D_{max}>\} _{k=i+1}^{i+n}$ is $\Bbb P(A_i = A_{\delta}) \Bbb P(D_i = D_{max}) \Bbb P(W_i = W_{min}) > 0 $ due to assumption \ref{A21}. The probability of the arrival of $J_{i+n+1}$ is $\Bbb P(A_i = A_{\delta}) \Bbb P(D_i = D_{max}) \Bbb P(W_i = W_{max}) > 0$.
As a result the event $\hat{\mathcal A}_i^\pi$ has positive occurrence probability since:
\begin{equation} \label{eq2}
\begin{split}
\Bbb P(\tilde{\mathcal A}_k) &=  \Bbb P (A_i > D_{max})[\Bbb P(A_i = A_{\delta}) \Bbb P(D_i = D_{max}) \\
& \Bbb P(W_i = W_{min})]^{n}  \Bbb P(A_i = A_{\delta}) \Bbb P(D_i = D_{max})\\
& \Bbb P(W_i = W_{max}) > 0
\end{split}
\end{equation}
The MUD policy drops a job with a smaller reward than $J_{i+n+1}$ causing the cumulative reward functions to increase at least by $\delta_W$. For comparison, the EDF policy drops $J_{i+n+1}$ because it misses its deadline.

The arrivals are independent and mutually exclusive events by definition (forcing an empty queue) and $\sum\limits_{j=1}^{\infty} \Bbb P(\hat{\mathcal A}_{j(n+2)}^\pi) = \infty$. By the second Borel Cantelli lemma,  $\Bbb P(\limsup\limits_{j \rightarrow \infty}\hat{\mathcal A}_{j(n+2)}^\pi) = 1$; i.e., meaning that this event occurs infinitely often.

\end{proof}
From lemma \ref{L6} the series of events of type $\hat{\mathcal A}_i$ satisfies:
\begin{equation}
\Delta U_{t_{i}}^{M} - \Delta U_{t_{i}}^{E} \geq \delta_W
\end{equation}

\begin{theorem}\label{T3}
For any queuing model $GI^+/D/1-GI^+/GI^+$ satisfying assumptions \ref{A1} - \ref{A8}, MUD policy is better than any policy $\pi \in \mathcal{M}_E$.
\end{theorem}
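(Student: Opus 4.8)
The goal, by Definition \ref{D3}, is to show that $\Bbb P\big(\lim_{n\to\infty}(U_{\tau_n}^{M}-U_{\tau_n}^{\pi})=\infty\big)=1$ for every $\pi\in\mathcal{M}_E$. The plan is to split the difference as $U_{\tau_n}^{M}-U_{\tau_n}^{\pi}=(U_{\tau_n}^{M}-U_{\tau_n}^{E})+(U_{\tau_n}^{E}-U_{\tau_n}^{\pi})$ and to show that the first term diverges at a strictly positive linear rate while the second term is of lower order. Theorem \ref{T2} already guarantees that MUD is as good as $\pi$ (the difference is asymptotically non-negative); the work here is to upgrade this to strict divergence by exhibiting a positive density of busy periods on which MUD strictly gains.

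For the first term I would first recall, from the event-by-event analysis in the proof of Lemma \ref{LT2}, that MUD drops the minimum reward-per-service-time job whereas EDF drops the job selected purely by deadline order; since MUD and EDF share the same queue potential (Lemma \ref{L24A}), this yields the pointwise inequality $\Delta U_{t}^{M}\ge\Delta U_{t}^{E}$ at every event, so $U_{\tau_n}^{M}-U_{\tau_n}^{E}$ is non-decreasing in $n$. Lemma \ref{L6} then supplies the special arrival sequence $\tilde{\mathcal A}$, an event of type \ref{QE4} on which $\Delta U_{t}^{M}-\Delta U_{t}^{E}\ge\delta_W$, and shows it recurs infinitely often. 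Because each such sequence begins from an empty queue (\ref{A2} forces a positive rate of empty-queue regenerations) and requires only the fast inter-arrivals $A_\delta$ (\ref{A41} gives these positive probability), the sequences start at a strictly positive asymptotic rate; hence $U_{\tau_n}^{M}-U_{\tau_n}^{E}\ge\delta_W\,N(\tau_n)$ with the count $N(\tau_n)$ growing linearly in $n$.

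For the second term, both EDF and $\pi$ lie in $\mathcal{M}_E\subseteq\mathcal{M}$, so by Lemma \ref{L24A} they process the same number of jobs and, by the definition of $\mathcal{M}_E$ together with the strong law of large numbers (invoking the independence and discreteness of \ref{A3}--\ref{A21}), each accumulates the same asymptotic average reward $\Bbb E(W_i)$ per job. Consequently $U_{\tau_n}^{E}-U_{\tau_n}^{\pi}$ grows sublinearly, i.e. $o(N(\tau_n))$. Adding the two terms, the linearly growing gain dominates the sublinear discrepancy, so $U_{\tau_n}^{M}-U_{\tau_n}^{\pi}\to\infty$ almost surely, which is exactly the conclusion demanded by Definition \ref{D3}.

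The main obstacle is the separation of growth rates: Lemma \ref{L6} by itself only delivers the gain ``infinitely often'' via the second Borel--Cantelli lemma, which is not enough to beat an $o(n)$ --- but a priori unbounded --- fluctuation of $U^{E}-U^{\pi}$. I therefore need the sharper statement that the beneficial busy periods occur with strictly positive density, which is precisely where the renewal structure forced by \ref{A2} and the arrival-timing assumption \ref{A41} are essential. A secondary subtlety is that the clean pointwise comparison $\Delta U^{M}\ge\Delta U^{E}$ rests on MUD and EDF having identical queue potentials (Lemma \ref{L24A}) and need not hold against an arbitrary $\pi\in\mathcal{M}_E$; this is exactly why the argument is routed through EDF rather than comparing MUD to $\pi$ directly.
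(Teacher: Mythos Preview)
Your overall strategy is the same as the paper's: first establish $U^{M}_{\tau_n}-U^{E}_{\tau_n}\to\infty$ using the event analysis of Lemma~\ref{LT2} together with Lemma~\ref{L6}, then extend to all $\pi\in\mathcal{M}_E$. Your triangle split $(U^{M}-U^{E})+(U^{E}-U^{\pi})$ is in fact more explicit than the paper's one-line appeal to Dertouzos at the end, but the content is equivalent. Your observation that Lemma~\ref{L6} only gives ``infinitely often'' and that one needs the renewal structure from \ref{A2}, \ref{A41} to upgrade this to positive density is exactly right, and the paper handles it the same way (bounding the sum below by $(n-n_0)\,\delta_W\,\Bbb P(\hat{\mathcal A}_k)/3$ via the SLLN).

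There is, however, one genuine slip. You write that ``MUD and EDF share the same queue potential (Lemma~\ref{L24A})'' and deduce the pointwise inequality $\Delta U_t^{M}\ge\Delta U_t^{E}$. Lemma~\ref{L24A} only gives $|Q_t^{M}|=|Q_t^{E}|$; the \emph{contents} of the two queue potentials can differ, because at earlier \ref{QE4} events MUD and EDF dropped different jobs. In particular MUD's queue may eventually contain only high-reward jobs, so its minimum $W_k^{M}$ can exceed the reward $W_j^{E}$ of the job that EDF happens to lose, making $\Delta U_t^{M}<\Delta U_t^{E}$ at that event. Thus the monotonicity of $U^{M}-U^{E}$ you assert does not follow. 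The paper does not use pointwise comparison; it uses only that $\Delta U_t^{M}\ge 0$ at every \ref{QE4} event (MUD never decreases the queue-potential reward), that $\Delta U_t^{M}\ge\delta_W$ on the Lemma~\ref{L6} events, and that $\hat{\Bbb E}(\Delta U_\tau^{E})\to 0$ from equation~(\ref{eqEDF}). Splitting the \ref{QE4} sum as $\sum\Delta U^{M}-\sum\Delta U^{E}$ and bounding the two pieces separately gives the linear gain without monotonicity. Replace your pointwise step by this separation and the rest of your argument goes through unchanged.
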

\begin{proof}
From theorem  \ref{T2} we know that $\Delta U_{t}^{M}=\Delta U_{t}^{E}$ in all cases except \ref{QE4} case and that $\lim\limits_{n \rightarrow \infty} \hat{\Bbb E} (\Delta U_{\tau_n}^{E}) \xrightarrow{a.s} 0$. in case \ref{QE4}.
From lemma \ref{L6} we can conclude as regards \ref{QE4} that
$\lim\limits_{i \rightarrow \infty}\sum\limits_{\tau_i}\Delta U_{\tau_{i}}^{E} = \infty$.
By the strong law of large numbers there exists an $n_0$ that for all $n > n_0: \hat{\Bbb E} (\Delta U_{\hat{t}_n}^{E}) <  \frac{\delta_W \Bbb P(\hat{\mathcal A}_k)}{3}$ with probability 1. Let
$C  = - \sum\limits_{m \leq n_0 :\tau_m \in \tau} U_{\hat{t}_m}^{E}$. Then

\begin{equation}
\begin{split}
U_{\hat{t}_n}^{M}& - U_{\hat{t}_n}^{E}   \geq C +\sum\limits_{ n \geq m \geq {n_0}}(\Delta U_{\hat{t}_m}^{M} - \Delta U_{\hat{t}_m}^{E})\\&
= C + \sum\limits_{n_0 \leq m \leq n:\hat{t}_m \not \in \tau}(\Delta U_{\hat{t}_m}^{M} - \Delta U_{\hat{t}_m}^{E})\\
& \geq C+\sum\limits_{n_0 \leq m \leq n:\hat{t}_m \in \tau}(\Delta U_{\hat{t}_m}^{M} - \Delta U_{\hat{t}_m}^{E})
\end{split}
\end{equation}
\begin{equation}
\begin{split}
\lim\limits_{n \rightarrow \infty} (U_{t_n}^{M} - U_{t_n}^{E}) & \geq  \\
&C+ \lim\limits_{n \rightarrow \infty}\sum\limits_{n_0 \leq m \leq n:\hat{t}_m \in \tau}(\Delta U_{\hat{t}_m}^{M}- \Delta U_{\hat{t}_m}^{E})
\\&
\geq  C+ \lim\limits_{n \rightarrow \infty}(n-n_0) \frac{\delta_W \Bbb P(\hat{\mathcal A}_k)}{3}
\xrightarrow{a.s}\infty
\end{split}
\end{equation}
This means that the MUD policy is better than EDF policy when assumptions \ref{A1} - \ref{A8} are held. 
Adding to it Dertouzos theorem \cite{stankovic2012deadline} we can deduce that MUD policy is better than any policy $\pi \in \mathcal{M}_E$.
\end{proof}

%--------- New -----------------------------------------------------------------------------------------------------------
\subsection{Deterministic Time and Dual Reward Case }  % $EDF% 
\label{sec:MUDimp}
In this section we study the optimality of the MUD policy for deterministic service time. To simplify an already complicated notation we focus on the two priority levels case. We define a  dual reward process by $W_i \in \{w_{min}, w_{max}:w_{min} < w_{max}\}$ i.e. $(W_i \sim \mathcal{B}(p))$. 

In the dual reward case we can split the queue into two sub-queues. Let 

\begin{itemize}

\item $\HighLight{Q_i^{M(max)}}=\{w_j \in Q_i^M:w_j=w_{max}\}$ be the set of jobs in the queue potential with maximal reward.
\item $\HighLight{Q_i^{M(min)}}=\{w_j \in Q_i^M:w_j=w_{min}\}$  be the set of jobs in the queue potential with minimal reward.
\end{itemize}
Similarly, we split the group of processed jobs to two groups:
\begin{itemize}
\item $\HighLight{\hat{S}_i^{M(max)}}=\{w_j \in \hat{S}_i^M:w_j=w_{max}\}$ be the set of processed jobs with maximal reward.
\item $\HighLight{\hat{S}_i^{M(min)}}=\{w_j \in \hat{S}_i^M:w_j=w_{min}\}$ be the set of processed jobs with maximal reward.
\end{itemize}

Let $R_i^{M(max)}=|Q_i^{M(max)} \cup \hat{S}_i^{M(max)}|$ be the number of jobs with maximal reward in the queue potential and in the group of processed jobs.

We say that $R_i^{M(max)}$ is optimal if no other policy can process more jobs with maximal rewards than $R_i^{M(max)}$. 
For the proofs in this section we assume the following:

\HighLight{
One of the major problems in non-preemptive scheduling without idle time is the problem of priority inversion \cite{stankovic2012deadline}. To overcome this problem we add the following assumption:
\begin{enumerate}[label=A\arabic*]
	\setcounter{enumi}{\value{AssumptionsCounter}}
	\item \label{A9} Let $D_{min} > 2B_{max}$ meaning that the deadline is at least twice the time required for processing the longest packet.
\end{enumerate}
For non-border routers in a network this assumption is reasonable since it allows the packets to reach their destination.
}

First we show that the mild inversion of two packets at the Queue head in $\mbox{MUD}_s$, does not cause drops of other packets.
\begin{lemma}\label{L7}
Let $J_l, J_k \in Q^M_i:o(J_l, t_i) = 1$ and $o(J_k, t_i) = 2$ and assume \ref{A9}, then replacing the service order of $J_l, J_k$ does not impact the MUD performance.
\end{lemma}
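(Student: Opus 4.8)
The plan is to reduce the statement to a single feasibility check by isolating exactly what the exchange changes. The one structural fact that makes this work is that the service time is deterministic, so the two head jobs jointly occupy the interval $[t_i,\, t_i+b_l+b_k)$ \emph{independently of the order} in which they are served, since $b_l+b_k$ is symmetric in $l$ and $k$. First I would record the consequence: the server becomes free at the same instant $t_i+b_l+b_k$ in both the original and the exchanged schedule, and therefore the scheduled start time $s(J_m,t_i)$ of every job $J_m$ with $o(J_m,t_i)\ge 3$ is unchanged, because each such job is preceded by the same total amount of committed work. Hence no job other than $J_l$ or $J_k$ present in $Q_i^M$ can be pushed past its deadline by the exchange, which already disposes of all such packets.

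Next I would verify that the two exchanged jobs are still served. The job $J_k$ is moved one slot earlier, so its start time only decreases and it trivially remains feasible. The job $J_l$ is delayed by exactly $b_k\le B_{max}$, so the only quantity to check is $t_i+b_k\le e_l$. This is precisely the guard under which $\mbox{MUD}_s$ performs the inversion, so $J_l$ is not dropped; assumption \ref{A9} is the structural hypothesis that makes such an inversion \emph{mild}, since with $D_{min}>2B_{max}$ the deadline budget of any waiting job exceeds twice the longest service time and can therefore absorb the displacement caused by a single in-service job of length at most $B_{max}$. Combining this with the previous paragraph, both $J_l$ and $J_k$ are served and the server re-synchronizes at $t_i+b_l+b_k$ exactly as it would without the exchange.

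The one place where the two schedules genuinely differ, rather than merely permute, is a job arriving \emph{during} the two reordered slots: while $J_k$ is in service (exchanged schedule) it is $J_l$ that waits in the queue, whereas in the original schedule it is $J_k$, so the waiting sets differ in membership although not in cardinality at any instant of $[t_i,\, t_i+b_l+b_k)$. I would show that, because the cardinalities coincide, any insertion handled by $\mbox{MUD}_a$ forces the same number of drops in both schedules, and that \ref{A9} prevents the reordering from manufacturing a fresh infeasibility for a third job, since each waiting job sees a displacement of at most one service slot against a deadline budget exceeding $2B_{max}$. Finally, in the dual-reward regime of this section a forced drop always removes a $w_{min}$ job of the same value in either schedule, so the processed set and its accumulated reward are unchanged. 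I expect this collateral-arrival bookkeeping to be the main obstacle, as it is the only step where the argument cannot be reduced to the order-independence of $b_l+b_k$ and must instead invoke \ref{A9} together with the two-valued reward structure.
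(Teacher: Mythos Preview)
Your argument is correct and follows the same route as the paper: preservation of $s(J_m,t_i)$ for every job with $o(J_m,t_i)>2$ via the commutativity of $b_l+b_k$, with assumption~\ref{A9} invoked to rule out a deadline miss caused by an arrival during the swapped slots. The paper's proof is in fact terser than yours---it dispatches the arrival case in a single sentence citing~\ref{A9}, without the cardinality or dual-reward bookkeeping of your third paragraph---and note that the symmetry of $b_l+b_k$ does not actually require deterministic service times, so the opening qualifier in your first paragraph is unnecessary.
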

\begin{proof}
In order to prove the lemma we need to show that for all packets that their order in the queue is higher than 2 ($o(J_i, t)  > 2$) their expected service time $s(J_i, t)$ is kept. By definition $s(J_i, t):= \sum\limits_{J_k:o(J_k, t) <o(J_i, t)}{b_k} =\sum\limits_{J_k:2 < o(J_k, t) <o(J_i, t)}{b_k}+\sum\limits_{J_k:o(J_k, t) \leq 2}{b_k}$. Since $\sum\limits_{J_k:o(J_k, t) \leq 2}{b_k}$ is kept regardless the processing order the $s(J_i, t)$ is kept. Assumption \ref{A9} eliminates the possibility that the replacement causes packet to miss its deadline due to a new arrival. This means that assuming \ref{A9} the replacement in the head of the queue does not impact the MUD performance.
\end{proof}

\begin{lemma}\label{L8}
For any queuing model $GI^+/D/1-GI^+/GI^/B+$ satisfying assumption \ref{A9}, $R_i^{M(max)}$ is optimal.
\end{lemma}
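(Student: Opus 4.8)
The plan is to reduce the claim to the optimality of EDF on the sub-stream of maximal-reward jobs, and then verify that MUD realises this reduced optimum. Since the service time is deterministic we may take $b_i=1$, so the ratio $w_i/b_i$ that governs both dropping (statement \ref{mud5} of $\mbox{MUD}_a$) and reordering (statement 5 of $\mbox{MUD}_s$) reduces to the reward $w_i$ itself. Consequently MUD always prefers to drop a job of reward $w_{min}$ whenever one is available at the conflict, and it evicts a job of reward $w_{max}$ only when every candidate for eviction (every job in positions $1,\dots,o_k$) already carries reward $w_{max}$. I will denote by $R^*$ the number of maximal-reward jobs retained by running EDF on the max-only sub-stream, i.e. on the arrivals of reward $w_{max}$ with the minimal-reward arrivals deleted.

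First I would establish the upper bound $R_i^{\pi(max)}\le R^*$ for every policy $\pi$. Given $\pi$, the maximal-reward jobs it commits to serving on time constitute, after deleting the intervals in which $\pi$ serves minimal-reward jobs, a feasible unit-length schedule of the max-only sub-stream; since deleting the minimal-reward load only frees server time and unit-job throughput is never improved by idling, the optimality of EDF for the $GI^+/D/1-GI$ model \cite{panwar1988optimal} gives $R_i^{\pi(max)}\le R^*$. Taking $\pi=$MUD in particular yields $R_i^{M(max)}\le R^*$.

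The heart of the proof is the reverse inequality $R_i^{M(max)}\ge R^*$, i.e. that the minimal-reward load MUD carries never costs it a maximal-reward job. I would argue by induction over the system events that MUD drops a maximal-reward job exactly when the maximal-reward jobs alone are infeasible. When $\mbox{MUD}_a$ evicts a job of reward $w_{max}$, statement \ref{mud5} guarantees that positions $1,\dots,o_k$ are occupied solely by maximal-reward jobs whose deadlines already force one eviction; hence even the max-only schedule must drop one here, so MUD's loss coincides with a forced loss of max-only EDF. It remains to rule out the competing way a minimal-reward job could block a maximal-reward one, namely by occupying the server. Here assumption \ref{A9} ($D_{min}>2B_{max}$) is decisive: a minimal-reward job already in service, or promoted ahead by the $\mbox{MUD}_s$ swap, delays a newly eligible maximal-reward job by at most $B_{max}$, still leaving it more than $B_{max}$ of slack, so it never misses its deadline on this account. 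Lemma \ref{L7} packages exactly this fact for the head-of-queue swap and shows it induces no further drops. Combining both inequalities gives $R_i^{M(max)}=R^*$, which by the upper bound dominates $R_i^{\pi(max)}$ for every $\pi$, establishing optimality.

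I expect the main obstacle to be the attainment step, and within it the cascade control: one must verify that the at-most-$B_{max}$ delays inflicted on maximal-reward jobs by minimal-reward service do not accumulate across a busy run of the server into a deadline miss. The slack margin $D_{min}-2B_{max}>0$ supplied by \ref{A9}, together with the fact that MUD serves maximal-reward jobs back-to-back once the blocking minimal-reward job clears, is what keeps the induction hypothesis intact throughout a busy period.
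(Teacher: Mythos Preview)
Your route differs from the paper's. The paper never introduces the max-only benchmark $R^*$; instead it runs a direct induction on the event sequence with the hypothesis ``$R_{i-1}^{M(max)}$ is optimal against every competing policy $\pi$'' and, in the critical \ref{QE4} case where MUD evicts a $w_{max}$ job, argues by contradiction: if some $\pi$ could keep one more max job at that instant, one could splice $\pi$'s queue potential with MUD's to violate the induction hypothesis at step $i-1$. The paper thus compares MUD to an arbitrary $\pi$ \emph{within the same arrival realisation}, which sidesteps any cross-execution coupling. Your sandwich argument---upper bound by max-only EDF, then attainment---is conceptually cleaner and makes the role of \cite{panwar1988optimal} explicit, which the paper's proof leaves implicit.

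There is, however, a real gap in your attainment step, and it is slightly upstream of the cascade issue you flag. The inference ``positions $1,\dots,o_k$ are all $w_{max}$, hence even the max-only schedule must drop one here'' does not follow from MUD's local state alone. At that wall-clock instant MUD and max-only EDF have \emph{different execution histories}: MUD may have spent earlier service slots on $w_{min}$ jobs while max-only EDF was already clearing max jobs, so max-only EDF's queue of max jobs at the same time can be strictly shorter and need not overflow. To make your argument go through you need a coupling invariant across the two executions---something like ``at every event time, the multiset of deadlines of max jobs in MUD's queue potential is dominated (in the schedulability sense) by that of max-only EDF''---and it is precisely in maintaining that invariant through \ref{QE5} events that \ref{A9} and Lemma~\ref{L7} must do their work. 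Your cascade remark is the right worry, but the missing piece is the invariant statement itself, not just the bound on per-job delay. The paper's direct-comparison induction avoids having to state any such invariant because it never leaves the single realisation.
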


\begin{proof}
The main idea behind the proof is that under MUD $Q_i^{M(max)}$ behaves like EDF queue of maximal reward jobs. Jobs with minimal reward are processed by MUD only if they do not impact jobs with maximal reward and close to their expiration time. Since we already proved that MUD processes the maximal number of packets as well as the maximal number of packets with highest priority, it must be optimal.
We wish to prove by induction that $R_i^{M(max)}$ is optimal. For $t_0$ it is true since $R_i^{M(max)}=0$.
Assume by induction that $R_{i-1}^{M(max)}$ is optimal, we wish to prove that it is optimal for $R_i^{M(max)}$, we prove it per  event type;
\begin{enumerate}
\item Event \ref{QE1} the job is processed in all policies. Since no other policy can do better, $R_i^{M(max)}$ is optimal ($R_i^{M(max)}=R_{i-1}^{M(max)}+1$).
\item Event \ref{QE3} the job is added to the queue potential without a drop. Since no other policy can do better, $R_i^{M(max)}$ is optimal  ($R_i^{M(max)}=R_{i-1}^{M(max)}+1$).
\item Event \ref{QE4}. The event happens also in any other policy as well due to the optimality of the MUD by theorem \ref{T10}. Let $J_i$ be the arriving job. The event is divided into two cases:
\begin{itemize} 
\item MUD marks for dropping a job with minimal reward - If the new job has a maximal reward then $R_i^{M(max)}=R_{i-1}^{M(max)}+1$ otherwise $R_i^{M(max)}=R_{i-1}^{M(max)}$. Since no other policy can do better, $R_i^{M(max)}$ is optimal. 
\item MUD marks for dropping a job with maximal reward - this happens only if the new job has a maximal reward.
Assume by negation that exists a policy that drops job with minimal reward instead. We can take it's queue potential and MUD queue potential and extract out of them a new queue potential which has more jobs with maximal reward than the induction assumption. This contradict the induction assumption. Since no other policy can do better, $R_i^{M(max)}$ is optimal.
\end{itemize}
\item Event \ref{QE5} this event does not impact $R_i^{M(max)}$ directly but through event \ref{QE4}. MUD postpones as much as possible giving service to jobs with minimal reward in order to maximize the event of dropping job with minimal reward at event \ref{QE4}. Lemma \ref{L7} states that the replacement does not impact the queue potential assuming \ref{A9}. 
\item Events \ref{QE6} and \ref{QE7} do not impact $R_i^{M(max)}$.
\end{enumerate}
We proved that $R_{i}^{M(max)}$ is optimal. 
\end{proof}

From lemma \ref{L8} we can conclude that at times $Q_{i}^{M}=\emptyset$, $\hat{S}_i^{M(max)}$ is optimal.

\begin{theorem}\label{T4}
For any queuing model $GI^+/D/1-GI^+/B$ satisfying assumption \ref{A9},  MUD policy is optimal. 
\end{theorem}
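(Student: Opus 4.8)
The plan is to reduce the dual-reward optimality of MUD to the two optimality results already in hand: the total-throughput optimality of Theorem~\ref{T10} (MUD belongs to the maximal-throughput class $\mathcal{M}$) and the high-reward optimality of Lemma~\ref{L8}. The starting observation is that, with only two reward levels, the cumulative reward decomposes linearly into a total-count term and a high-reward-count term. Writing $N_t^\pi = |\hat{S}_t^\pi|$ for the number of processed jobs and $M_t^\pi = |\hat{S}_t^{\pi(max)}|$ for the number of processed jobs carrying reward $w_{max}$, one has
\begin{equation}
U_t^\pi = w_{max} M_t^\pi + w_{min}(N_t^\pi - M_t^\pi) = w_{min} N_t^\pi + (w_{max}-w_{min}) M_t^\pi .
\end{equation}
Both coefficients are non-negative: $w_{max}-w_{min} > 0$ by the definition of the dual reward, and $w_{min} \geq 0$ since rewards are non-negative.

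I would carry out the comparison at the empty-queue instants $\tau_n$, which occur infinitely often by the standing assumption that the server idles infinitely often (equivalently \ref{A2}). At such a time the queue potential is empty, so $S_{\tau_n}^\pi = \hat{S}_{\tau_n}^\pi$ and the counts above coincide with the corresponding counts over the whole potential set. Fix an arbitrary competing policy $\pi$. Theorem~\ref{T10} gives $N_{\tau_n}^{M} \geq N_{\tau_n}^{\pi}$, while Lemma~\ref{L8} together with its corollary ($\hat{S}_i^{M(max)}$ is optimal when $Q_i^{M}=\emptyset$) gives $M_{\tau_n}^{M} \geq M_{\tau_n}^{\pi}$. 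Substituting into the decomposition yields
\begin{equation}
U_{\tau_n}^{M} - U_{\tau_n}^{\pi} = w_{min}\,(N_{\tau_n}^{M}-N_{\tau_n}^{\pi}) + (w_{max}-w_{min})\,(M_{\tau_n}^{M}-M_{\tau_n}^{\pi}) \geq 0 ,
\end{equation}
a sum of non-negative products, for every $n$ on every sample path. By Definition~\ref{D2} MUD is as good as $\pi$, and since $\pi$ was arbitrary, Definition~\ref{D4} gives optimality.

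The main obstacle is not the arithmetic of the last step but the simultaneity on which it rests: a priori a policy might trade total throughput for additional high-reward completions, so dominating in one count need not come with dominating in the other. The real content is that the single policy MUD attains both maxima at once, at the same empty-queue instants, and this is precisely what Theorem~\ref{T10} and Lemma~\ref{L8} jointly assert. Lemma~\ref{L8} in turn leans on \ref{A9} (through Lemma~\ref{L7}) to rule out priority inversion, so that the head-of-queue reordering in $\mbox{MUD}_s$ used to shield high-reward jobs never costs a completion. I would therefore take care to state explicitly that the two inequalities $N_{\tau_n}^{M}\geq N_{\tau_n}^{\pi}$ and $M_{\tau_n}^{M}\geq M_{\tau_n}^{\pi}$ hold for the same $\pi$ and the same $\tau_n$, which is exactly what legitimizes passing to the non-negatively weighted combination.
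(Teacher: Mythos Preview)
Your proposal is correct and follows essentially the same route as the paper: decompose the cumulative reward into a total-count term and a high-reward-count term, invoke Theorem~\ref{T10} for the former and Lemma~\ref{L8} for the latter at the empty-queue instants, and conclude by non-negative combination. Your rearrangement $U_t^\pi = w_{min} N_t^\pi + (w_{max}-w_{min}) M_t^\pi$ makes the sign structure slightly more transparent than the paper's $R_i^{M(max)} W_{max} + (N_i^{M}-R_i^{M(max)}) W_{min}$, and your explicit remark about simultaneity is a point the paper leaves implicit, but the argument is the same.
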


\begin{proof}
To prove the theorem we need to show that at times $Q_{i}^{M}=\emptyset$, $U^M_{t_i}$ is optimal .

Let $N_i^M$ be the number of jobs process by policy MUD until time $t_i$. 
\begin{equation}
U_{t_i}^M= \sum\limits_{J_l \in S_{t_i}^M}^{}w_l = R_i^{M(max)}W_{max}+(N_i^E-R_i^{M(max)})W_{min}
\end{equation}
In theorem \ref{T10} we proved that $N_i^M$ is optimal and in lemma \ref{L8} we proved that $R_i^{M(max)}$ is optimal. 
Hence $U_{t_i}^M$ includes both the maximal number of packets and maximal number of packets with highest rewards. This implies that no other policy can do better.
\end{proof}

\begin{figure}
  \centering
  \includegraphics[width=3.5in]{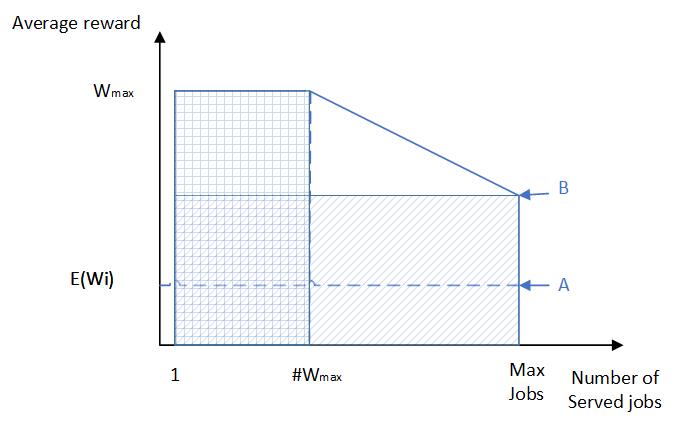}
  \caption[center]{Pareto optimality graph -\\  
   Jobs counting is ordered by descending rewards}
\label{Graph002}
\end{figure}

Figure \ref{Graph002} depicts the behavior of different policies in respect to the number of served jobs and the estimated reward. Point A presents the performance of policies in $\mathcal{M}_E$. MUD performance is between point $A$ and $B$. In the dual reward case we proved that MUD performance is at point B. The area in the graph depicts the total reward accumulated by the policy. We prove that MUD is optimal in the dual reward case concluding that its rectangle has the maximal area. 

%\subsection{Deterministic Reward Case with Dual processing times }  % $EDF% 
%\label{sec:DetReward}

Similar analysis shows the following theorem for two packet sizes with equal rewards:

\begin{theorem}\label{T5}
For any queuing model $GI^+/B/1-GI^+/D$ satisfying \ref{A9}, MUD policy is optimal. 
\end{theorem}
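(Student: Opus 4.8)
The statement of Theorem~\ref{T5} is dual to Theorem~\ref{T4}: instead of deterministic service time with a binomial reward, we have a binomial (dual-valued) service time with a constant reward. The plan is to exhibit an explicit symmetry between the two models and then invoke Theorem~\ref{T4} rather than re-running the entire induction. In the $GI^+/B/1-GI^+/D$ model every job carries the same reward $W_i=1$, and $B_i\in\{b_{min},b_{max}\}$ with $b_{min}<b_{max}$. Since rewards are equal, maximizing cumulative reward $U_t^M$ is exactly maximizing the number of processed jobs, so the objective is to show MUD processes the largest possible count of jobs while simultaneously favoring the short jobs, which is what lets more jobs fit before the deadlines.

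First I would make the correspondence precise. A job with a long service time $b_{max}$ consumes more of the server's capacity, so processing a short job is ``cheaper'' per unit reward; the throughput ratio $\frac{w_i}{b_i}$ that MUD maximizes becomes $\frac{1}{b_i}$, which is largest for the short jobs. Thus in this model MUD prefers short jobs, exactly as in the dual-reward model it preferred high-reward jobs. I would define $R_i^{M(short)}=|Q_i^{M(short)}\cup\hat S_i^{M(short)}|$ as the number of short jobs in the queue potential plus the processed set, mirroring $R_i^{M(max)}$, and then establish the analogue of Lemma~\ref{L8}: that $R_i^{M(short)}$ is optimal, i.e. no policy processes more short jobs. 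The induction over event types \ref{QE1}, \ref{QE3}, \ref{QE4}, \ref{QE5}, \ref{QE6}, \ref{QE7} proceeds identically, with ``minimal reward'' replaced by ``long service time'' throughout; the crucial point is again that MUD's dropping rule discards the job with the smallest $\frac{1}{b_j}$, namely a long job, whenever feasible, and its $\mbox{MUD}_s$ swap keeps short jobs ahead without harming the queue potential by Lemma~\ref{L7} under assumption~\ref{A9}.

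The heart of the argument is still the optimality of the total number of processed jobs, which Theorem~\ref{T10} guarantees for the deterministic-time model but must be re-justified here because service times are no longer constant. The cleanest route is to observe that in the equal-reward case the cumulative reward is the job count, so Lemma~\ref{L24A} (which only concerns queue-potential sizes and does not use rewards) already gives $|Q_t^M|=|Q_t^E|$, and hence MUD processes the maximal number of jobs; combining that with the optimality of $R_i^{M(short)}$ yields, exactly as in Theorem~\ref{T4}, that at times when $Q_i^M=\emptyset$ the pair (number of jobs, number of short jobs) is simultaneously optimal, so $U_{t_i}^M$ is optimal.

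The main obstacle I anticipate is verifying that Lemma~\ref{L24A} and the synchronization argument underlying it survive the transition from deterministic $B_i$ to dual-valued $B_i$. The proof of Lemma~\ref{L24A} explicitly invokes deterministic service time to claim both policies ``end processing simultaneously'', and with two service lengths the completion times of MUD and EDF can desynchronize if the two policies process jobs of different lengths at the same head position. I would therefore need to replace the synchronized-completion step by an argument that tracks queue-potential size across possibly unequal busy periods, most likely by re-running the event-by-event induction directly on $R_i^{M(short)}$ and the total count rather than borrowing the deterministic lemma verbatim. Establishing that MUD's preference for short jobs never strictly reduces the total number of feasible jobs below the EDF optimum is the delicate comparison, since a short job scheduled early could in principle crowd out a long job that EDF would have admitted; assumption~\ref{A9}, giving each deadline at least $2B_{max}$ of slack, is exactly what rules this out and closes the gap.
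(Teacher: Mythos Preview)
Your proposal matches the paper's approach: the paper gives no detailed proof of Theorem~\ref{T5} at all, merely asserting that ``similar analysis'' to Theorem~\ref{T4} establishes it, i.e., exactly the duality you identify between $GI^+/D/1-GI^+/B$ and $GI^+/B/1-GI^+/D$, with $w_i/b_i$ reducing to $1/b_i$ so that MUD prefers short jobs just as it preferred high-reward jobs. You are in fact more careful than the paper, correctly flagging that the synchronization step in Lemma~\ref{L24A} and hence Theorem~\ref{T10} is tied to deterministic service time and would have to be re-argued for dual-valued $B_i$; the paper simply sweeps this under ``similar analysis''.
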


This result has both practical and theoretical interest. While EDF is not optimal for non-deterministic service time, we can extend the optimality of MUD to the case of dual service times when there are no rewards. Furthermore, in practice in many cases we have data packets of fixed length (typical in Ethernet) and acknowledgment packets which are much shorter. Assuming a sufficient reception window for ACK packets, the MUD policy is indeed optimal.

\subsection{Non-deterministic Service Time}
The problem of ordering non-deterministic service time queues without preemption is similar to be optimization problems which their complexity is $NP-hard$ \cite{brucker2007scheduling}. 
In addition, if avoiding assumption \ref{A9} policies that do not force idle cannot achieve optimality due to priority inversion problem \cite{stankovic2012deadline}. 
In the next section we present simulation of the non-deterministic service time and compare the policies we discussed.

%----------End New ---------------------------------------------------------------------------------------------------------
\section{Numerical Results} \label{NR}
\label{sec:num}
In this section we present the numerical results of simulations comparing the the performance of EDF, CBS, MUD, two versions of $c\mu_k/\theta_k$ and $Greedy$ policy. For the $c\mu_k/\theta_k$ we allocated a queue per reward. We used linear cost function that is based on the queue reward (4 or 10) multiplied by the length of the queue. Since the $c\mu_k/\theta_k$ policies do not take advantage of all at the knowledge of the deadline we introduced a new version of the $c\mu_k/\theta_k$ that instead of ordering the queue by FCFS it orders the jobs using EDF and called it $c\mu/\theta_E$. We also compare the results with a greedy algorithm that process the job with the highest reward.
In the following simulation we tested the M/M/1-M/B case. We ran 100,000 samples with the following parameters:
\begin{itemize}
	\item $A$ - is exponentially distributed with $\lambda_a = 0.9, 1.2, 1.5, 1.8, 2.1$
	\item $B$ - is exponentially distributed with $\lambda_b =1$
	\item $D$ - is exponentially distributed with $\lambda_d =0.005$
	\item $E$ - is distributed $\mathcal{B}(0.5)$ with probability $p$ the reward is 10 and with probability $1-p$ the reward is 4.
\end{itemize}

Figure \ref{RewardDual1Rewards} presents the reward collection of the different policies. Figure \ref{RewardDual1Jobs} presents the number of jobs that were processed. The results are shown relative to the baseline of the EDF policy. 

\begin{figure}
  \centering
  \includegraphics[width=3.5in]{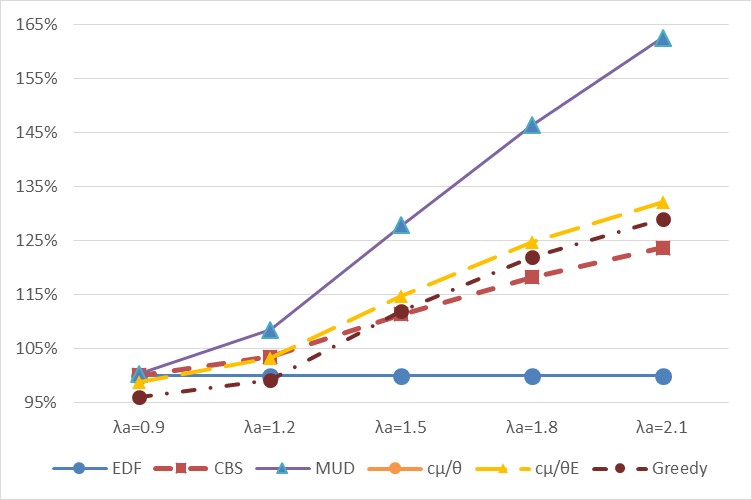}
  \caption[center]{Collected rewards M/M/1-M/B case}
\label{RewardDual1Rewards}
\end{figure}

\begin{figure}
  \centering
  \includegraphics[width=3.5in]{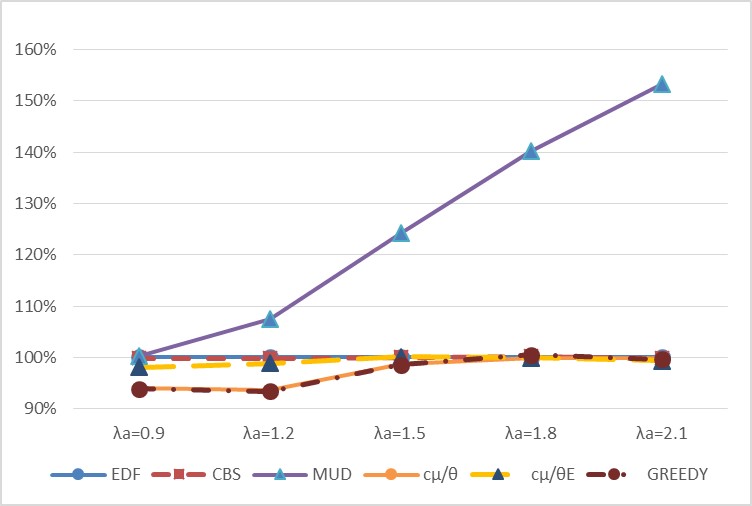}
  \caption[center]{Number of processed Jobs M/M/1-M/B case}
\label{RewardDual1Jobs}
\end{figure}

In the next simulation we tested the M/M/1-M/M case. We ran 100,000 samples with the following parameters:
\begin{itemize}
	\item $A$ - is exponentially distributed with $\lambda_a = 0.9, 1.2, 1.5, 1.8, 2.1$
	\item $B$ - is exponentially distributed with $\lambda_b =1$
	\item $D$ - is exponentially distributed with $\lambda_d =0.005$
	\item $E$ -  is exponentially distributed with $\lambda_e =0.1$.
\end{itemize}

Figure \ref{RewardDual1Rewards} presents the reward collection of the different policies. Figure \ref{RewardDual1Jobs} presents the number of jobs that were processed. The results are shown relative to the baseline of the EDF policy.

\begin{figure}
  \centering
  \includegraphics[width=3.5in]{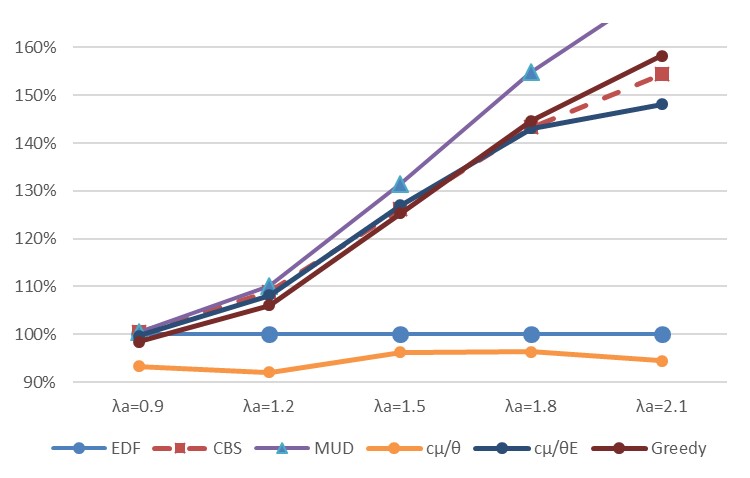}
  \caption[center]{Collected rewards M/M/1-M/M case}
\label{MMMRewards}
\end{figure}

\begin{figure}
  \centering
  \includegraphics[width=3.5in]{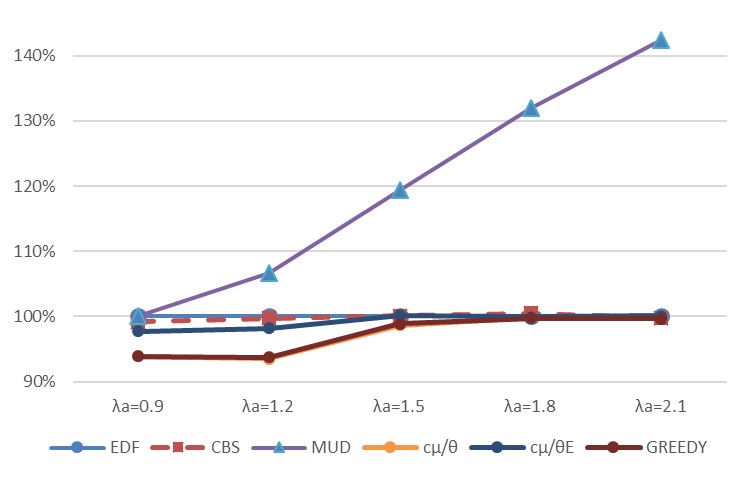}
  \caption[center]{Number of processed Jobs M/M/1-M/M case}
\label{MMMJobs}
\end{figure}
In both simulations we clearly see that MUD performs better than other policies. It processes many more packets and the total collected reward is significantly higher than other policies.
We see also that MUD outperforms when the number of overrun increases due to higher ration between the arrival rate and the service rate ($\rho$). 

\section{Conclusion}
\label{sec:conclusion}
We presented a novel scheduling policy that takes advantage of knowing the job parameters upon arrival.
We proved that the policy is optimal under deterministic service time and binomial reward distribution. In the more general case we proved that the policy processes the maximal number of packets while collecting rewards higher than the expected reward. We present simulation results that show its high performance in more generic cases while keeping computational demands low.

\HighLight{
}
\appendix
%\section{Appendices}
\subsection{Queue Behavior Analysis}
\label{AnnexA}

In this annex a queue behavior analysis of a simple model satisfying $GI^+/D/1-GI^+/GI^+$ is provided.
Assume that the current time is $t=t_i=\hat{t}_n$ and $t'=\hat{t}_{n-1}$.
\begin{enumerate}[label=QE\arabic*]
	\item \label{QE1} Job $J_i$ reaches the system while the server is idle and the queue is empty - In this case $J_i$ is processed  immediately. The queue and the queue potential are kept empty $\hat{Q}_{t}^\pi = Q_{t}^\pi = \emptyset$ and  $\Delta Q_{t}^{\pi} = 0$. $S_{t}^\pi = \hat{S}_{t}^\pi = S_{t'}^\pi \cup \{J_i\} $. $\Delta U_{t}^\pi = w_i$ and $U_{t}^\pi = U_{t'}^\pi + w_i$.
	\item \label{QE2} Job $J_i$ reaches the system while the server is busy - In this case $J_i$ waits in the queue, $\hat{Q}_{t}^\pi =\hat{Q}_{t'}^\pi \cup \{J_i\}$ and $\hat{S}_{t}^\pi = \hat{S}_{t'}^\pi$. The arrival of a job may impact the queue potential $Q_{t}^\pi$ differently according to the parameters of the job and the way the policy arranges the queue. This can cause the following two events:
\begin{enumerate}
	\item \label{QE3} If $\Delta Q_{t}^{\pi}=1$ the job joins the queue potential, $Q_{t}^\pi =Q_{t'}^\pi \cup \{J_i\}$ and $\Delta U_{t}^\pi = w_i$ .
	\item \label{QE4} If $\Delta Q_t^{\pi}=0$, i.e., the queue potential size does not change. This is a result of either not adding the arriving job to the queue potential or removing a different job from the queue potential because it missed its deadline. Assume $J_j; j \leq i$ is the job removed from the queue potential, then $Q_{t}^\pi =Q_{t'}^\pi \cup \{J_i\} - \{J_j\}$ and $\Delta U_{t}^\pi = w_i - w_j$ .

	\item \label{QE4A} Job $J_i$ reaches the system while the server is busy and $\Delta Q_{\hat{t}_n}^{\pi} < 0$. This event cannot take place since the service time is deterministic and there is only a replacement of jobs.
\end{enumerate}
	\item \label{QE5} Job $J_j$ where $j < i$ begins process. In this case  $\hat{Q}_{t}^\pi =\hat{Q}_{t'}^\pi - \{J_j\}$, $Q_{t}^\pi =Q_{t'}^\pi - \{J_j\}$, $\Delta Q_{t}^\pi = -1$, $S_{t}^\pi =S_{t'}^\pi$, $\hat{S}_{t}^\pi =\hat{S}_{t'}^\pi \cup \{J_j\}$ and $\Delta U_{t}^\pi = 0$.

	\item \label{QE6} Deadline expired for $J_j$ - This event does not impact the variables that describe the system. $J_j$ is  kept in the queue. $Q_{t}^\pi$ is not impacted since $J_j \notin Q_{t}^\pi$. $\hat{S}_{t}^\pi = \hat{S}_{t'}^\pi$ as well as $S_{t}^\pi = S_{t'}^\pi$.
	\item \label{QE7} Drop of job $J_j$. The drop means that the job was removed from the queue i.e  $\hat{Q}_{t}^\pi =\hat{Q}_{t'}^\pi  - \{J_j\}$. If  $J_j \in Q_{t'}^\pi$ then $\Delta Q_{t}^\pi = -1$ and $\Delta U_t^\pi = -w_j$.
\end{enumerate}

Next we prove the relationship between $Q_t^\pi$ and $\hat{S}_{t'}^\pi (t' \leq t)$.

\begin{lemma} \label{L1}
For any queuing model satisfying $GI^+/D/1-GI^+/GI^+$, two policies $\pi_1, \pi_2$ and times  $t=\hat{t}_n$, $t'=\hat{t}_m$ ($\hat{t}_n$ and $\hat{t}_m$ are the times of $\mathcal{E}_n^{\pi_1}$ and $\mathcal{E}_m^{\pi_1}$ )
:  if $\forall t, t'$ and $t \geq t': |Q_{t'}^{\pi_1}| \geq |Q_{t'}^{\pi_2}|$  then $|\hat{S}_{t}^{\pi_1}| \geq |\hat{S}_{t}^{\pi_2}|$.
\end{lemma}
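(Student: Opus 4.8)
The plan is to recast the claim as a statement about cumulative server busy time. First I would exploit assumption \ref{B3} (no forced idle) together with the deterministic unit service time: the server processes exactly one job per time unit while busy, a job enters $\hat{S}_t^\pi$ at the instant its processing begins (event \ref{QE5}, or \ref{QE1} on arrival to an empty system), and the measure of the time on which the server is busy equals $|\hat{S}_t^\pi|$ minus the sub-unit fraction of the job still in service at $t$. Writing $I^\pi(s)$ for the cumulative idle time of $\pi$ on $[0,s]$, this gives $|\hat{S}_t^\pi| = t - I^\pi(t) + r^\pi(t)$ with a correction $r^\pi(t)\in[0,1)$ that equals $0$ whenever the server is idle at $t$. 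Thus, up to the correction terms, proving $|\hat{S}_t^{\pi_1}| \ge |\hat{S}_t^{\pi_2}|$ reduces to showing that $\pi_1$ accumulates no more idle time than $\pi_2$, i.e. $I^{\pi_1}(t) \le I^{\pi_2}(t)$.

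Next I would bring in the hypothesis through the elementary fact that, under \ref{B3}, the server can only be idle when there is no feasible waiting job, so idleness at a time $s$ forces $Q_s^\pi=\emptyset$. Since $|Q_s^{\pi_1}| \ge |Q_s^{\pi_2}|$ for every $s \le t$, an empty potential for $\pi_1$ forces an empty potential for $\pi_2$, and hence $\{s:Q_s^{\pi_1}=\emptyset\}\subseteq\{s:Q_s^{\pi_2}=\emptyset\}$. This is the mechanism that makes $\pi_1$ the busier policy: the intervals on which $\pi_1$ is \emph{permitted} to idle are contained in those on which $\pi_2$ may idle.

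The hard part is that this containment is between the potential-empty sets, not between the true idle sets: the gap is precisely the at-most-unit tail at the end of each busy period, on which $Q_s^\pi=\emptyset$ already holds but the server is still completing its last committed job. Because of these sub-unit tails the pointwise comparison of busy indicators is not monotone -- $\pi_2$ can still be finishing an old job at an instant when $\pi_1$ has already gone idle -- so a naive integration of the indicator inequality fails. To handle this I would anchor the argument at the epochs $\tau_n$ of \ref{A2}, at which both systems are simultaneously empty; there both servers have completed every job they ever admitted, the corrections $r^\pi(\tau_n)$ vanish, and $|\hat{S}_{\tau_n}^{\pi}| = |S_{\tau_n}^{\pi}|$. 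I would then induct over the renewal cycles $(\tau_n,\tau_{n+1}]$, showing cycle-by-cycle that the potential domination forces $\pi_1$ to have admitted, and therefore by the end of the cycle served, at least as many jobs as $\pi_2$; equivalently, I would construct an explicit injection from $\hat{S}_t^{\pi_2}$ into $\hat{S}_t^{\pi_1}$ matching each begin-processing event of $\pi_2$ to a distinct begin-processing event of $\pi_1$ occurring no later. Verifying that this injection is well defined -- that the domination of queue potentials really does supply a fresh, earlier begin event of $\pi_1$ for every begin event of $\pi_2$, even across the sub-unit tails -- is the main obstacle, and is exactly where the deterministic service time and the event bookkeeping of \ref{QE1}--\ref{QE7} carry the weight.
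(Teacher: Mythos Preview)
Your route is genuinely different from the paper's, and the idle-time decomposition $|\hat S_t^\pi|=t-I^\pi(t)+r^\pi(t)$ is a correct and natural framing. The paper, however, never touches idle time: it argues by a minimal-counterexample case analysis. One takes the first event time at which $|\hat S_t^{\pi_1}|<|\hat S_t^{\pi_2}|$ and runs through \ref{QE1}--\ref{QE7}. For arrivals and for simultaneous service starts the two counts move in lockstep, so the only live case is when $\pi_2$ alone begins a job. Because service is deterministic, the two servers stay synchronized until one of them first hits an empty queue potential; the hypothesis $|Q_s^{\pi_1}|\ge|Q_s^{\pi_2}|$ forces $\pi_2$ to be the one that emptied, so at that earlier instant $\pi_1$ was already strictly ahead in $\hat S$, and a single extra start by $\pi_2$ cannot overturn the inequality. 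No integration, no injection, no appeal to \ref{A2}.

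The gap in your plan is exactly the obstacle you yourself flag, and neither of your two proposed fixes closes it. Anchoring at the epochs $\tau_n$ of \ref{A2} is not available here: the lemma is stated and proved without \ref{A2}, and even granting \ref{A2} you would recover the inequality only at those regeneration times, not at every event time $\hat t_n$ as the statement requires (and as the later lemmas use it). The injection from $\hat S_t^{\pi_2}$ into $\hat S_t^{\pi_1}$ is the right instinct, but to exhibit it you must show that each begin-event of $\pi_2$ is preceded by an unmatched begin-event of $\pi_1$; establishing that is precisely the ``synchronized until the first empty queue, and $\pi_2$ empties first'' argument that drives the paper's proof. In other words, the idle-time detour lands you back at the same combinatorial core, so it adds length without avoiding the difficulty.
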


\begin{proof}
Assume by negation that there exists $t$ such that $\forall t' \leq t :|Q_{t'}^{\pi_1}| \geq |Q_{t'}^{\pi_2}|$ but $|\hat{S}_{t}^{\pi_1}| < |\hat{S}_{t}^{\pi_2}|$ . We take the minimal $t$ that fulfills the above condition; i.e. $|\hat{S}_{t''}^{\pi_1}|\geq |\hat{S}_{t''}^{\pi_2}|$ but  $|\hat{S}_{t}^{\pi_1}| < |\hat{S}_{t}^{\pi_2}|$ where $t''=t_{n-1}$.
We analyze the behavior by cases:
\begin{itemize}
	\item Job arrives when the queue is empty and the server is idle (\ref{QE1}) then; $|\hat{S}_{t}^{\pi_1}|=|\hat{S}_{t''}^{\pi_1}|+1 \geq |\hat{S}_{t''}^{\pi_2}|+1=|\hat{S}_{t}^{\pi_2}|$ which contradicts the assumption. The process is kept synchronized.
	\item Job arrives when the server is busy (\ref{QE2}). The job is added to the queue and there is no change in the size of $|\hat{S}_t^\pi|$ which contradicts the assumption.
      \item $\pi_1$ and $\pi_2$ are synchronized and they start processing jobs (\ref{QE5}). Then $|\hat{S}_{t''}^{\pi_1}|+1 = |\hat{S}_{t}^{\pi_1}| \geq |\hat{S}_{t}^{\pi_2}| =|\hat{S}_{t''}^{\pi_2}| +1$  which contradicts the assumption.
	\item Only $\pi_1$ starts processing a job. Then $|\hat{S}_{t''}^{\pi_1}| + 1 = |\hat{S}_{t}^{\pi_1}| > |\hat{S}_{t''}^{\pi_2}| = |\hat{S}_{t}^{\pi_2}| $ which contradicts the assumption. The jobs processing is kept synchronized.
	\item Only $\pi_2$ starts processing a job. The case that $\pi_1$ and $\pi_2$ are not synchronized is due to the fact that one of the queues was empty before the other. The assumption that $|Q_{t'}^{\pi_1}| \geq |Q_{t'}^{\pi_2}|$ forces that $|Q_{\hat{t}_k}^{\pi_2}|$ was empty before $|Q_{\hat{t}_k}^{\pi_1}|$ for one or more $\hat{t}_k$-s ($\hat{t}_k$ is the time of event $\mathcal{E}_k^{\pi_1}$). In this case $|\hat{S}_{t}^{\pi_1}| > |\hat{S}_{t}^{\pi_2}|$ and adding a new job to $\hat{S}_i^{\pi_2}$ can change the relationship back to $|\hat{S}_{t}^{\pi_1}| \geq |\hat{S}_{t}^{\pi_2}|$, contradicting the assumption.
\end{itemize}
\end{proof}

\subsection{List of Acronyms}
\begin{acronym}
\acro{CBD}{Cost Based Dropping}
\acro{CBS}{Cost Based Scheduling}
\acro{DifServ}{Differentiated Services}
\acro{EDF}{Earliest Deadline First}
\acro{FCFS}{First Come First Serve}
\acro{FTP}{File Transfer Protocol}
\acro{IBS}{Impatience upon Beginning of Service}
\acro{IES}{Impatience upon End of Service} 
\acro{IETF}{Internet Engineering Task Force}
\acro{IntServ}{Integrated services}
\acro{IoT}{Internet of Things}
\acro{ISP}{Internet Service Provider}
\acro{MDP}{Markov Decision Process}
\acro{MEDF}{Modified Earliest Deadline First}
\acro{MUD}{Maximum Utility with Dropping}
\acro{QoS} {Quality of Service}
\acro{SLA}{Service Level Agreement}
\acro{STE}{Shortest Time to Expiry}

\end{acronym}

\section*{Acknowledgement}
The authors would like to thank professor Lang Tong from Cornell University for useful discussions regarding this manuscript and to the anonymous reviewers whose comments significantly improved the presentation of this paper.

\bibliographystyle{ieeetran}

\end{document}